\newtheorem{assumption}{Assumption}
\newtheorem{lemma}{Lemma}
\newtheorem{remark}{Remark}
\newtheorem{theorem}{Theorem}
\newtheorem{corollary}{Corollary}
\newtheorem{definition}{Definition}
\newtheorem{problem}{Problem}
\begin{document}
%
\title{Data-Driven Motion Planning for Uncertain Nonlinear Systems}

%


\author{%
  Babak~Esmaeili,%
  ~Hamidreza~Modares\textsuperscript{*},%
  ~and~Stefano~Di~Cairano%
  \thanks{This work is supported by the National Science Foundation under award ECCS‑2227311.}%
  \thanks{B.~Esmaeili and H.~Modares are with the Department of Mechanical Engineering, Michigan State University, East Lansing, MI 48823, USA. (e‑mails: \texttt{esmaeil1@msu.edu}; \texttt{modaresh@msu.edu}).}%
  \thanks{S.~Di~Cairano is with Mitsubishi Electric Research Laboratories (MERL), Cambridge, MA 02139, USA. (e‑mail: \texttt{dicairano@merl.com}).}%
}


%
%
\maketitle

\begin{abstract}
This paper proposes a data‑driven motion‑planning framework for nonlinear systems that constructs a sequence of overlapping invariant polytopes. Around each randomly sampled waypoint, the algorithm identifies a convex admissible region and solves data‑driven linear‑matrix‑inequality problems to learn several ellipsoidal invariant sets together with their local state‑feedback gains. The convex hull of these ellipsoids—still invariant under a piece‑wise‑affine controller obtained by interpolating the gains—is then approximated by a polytope. Safe transitions between nodes are ensured by verifying the intersection of consecutive convex‑hull polytopes and introducing an intermediate node for a smooth transition. Control gains are interpolated in real time via simplex‑based interpolation, keeping the state inside the invariant polytopes throughout the motion. Unlike traditional approaches that rely on system dynamics models, our method requires only data to compute safe regions and design state-feedback controllers. The approach is validated through simulations, demonstrating the effectiveness of the proposed method in achieving safe, dynamically feasible paths for complex nonlinear systems.
\end{abstract}

\begin{IEEEkeywords}
Data-Driven Control, Motion Planning, Rapidly-exploring Random Tree, Invariant Sets, Nonlinear Systems.
\end{IEEEkeywords}

\IEEEpeerreviewmaketitle

\section{Introduction}

\IEEEPARstart{M}{}otion planning is a fundamental problem in control and robotics, which involves determining a feasible trajectory for a system to move from an initial state to a desired target state while avoiding obstacles and satisfying system constraints \cite{gonzalez2015review}.  Over the years, several motion-planning approaches have been proposed, including graph search-based methods \cite{likhachev2003ara}, sampling-based methods like rapidly exploring random trees (RRT) \cite{lavalle2001randomized}, behavior-based approaches \cite{mataric1992behavior}, machine learning-based approaches \cite{sun2021motion}, potential fields \cite{ge2002dynamic}, and optimization-based techniques such as differential dynamic programming \cite{li2020hybrid}. Among them, RRT, as a sampling-based approach, has received a surge of interest due to its success in robotic applications. However, most of these successful strategies are under assumptions that cannot be certified in many applications \cite{lavalle2006planning, karaman2011sampling}. For instance, the planning is typically performed assuring that the waypoints are kinematically feasible. In ever-changing environments, dynamic feasible of the trajectories is also crucial. Another challenge is that the system dynamics are nonlinear. If RRT planner does not account for nonlinear dynamics, frequent re-planning might be required, ruining the system performance.  Motion planning for nonlinear autonomous systems is a challenging problem due to the complexity of their dynamics and the need to account for multiple factors, such as system constraints and safety requirements. Unlike linear systems, the inherent nonlinearities in such systems make traditional planning methods less effective or computationally expensive \cite{frazzoli2005maneuver}. 

Another challenge is that system dynamics can be uncertain, which requires leveraging data-driven approaches to reduce uncertainties. Sampling-based planners generate random waypoints without verifying if the system is capable of safely traversing them, given only available data. This can lead to unsafe maneuvers and potential collisions when executed on real systems \cite{karaman2011sampling, kavraki1996probabilistic}. {Therefore, integrating data-driven safety guarantees into motion-planning algorithms is essential to ensure reliable operation \cite{schouwenaars2001mixed}. One promising direction is to leverage motion planners based on finding a sequence of overlapping invariant sets \cite{weiss2017motion,danielson2020robust,danielson2016path,niknejad2024soda} and extend them to data-based planners rather than model-based and to nonlinear systems rather than linear systems. 
Constraint admissible sets are regions in the state space within which the system can remain safe under certain control laws, even in the presence of disturbances and model uncertainties. By learning a sequence of invariant sets from data, planners can generate only those waypoints and trajectories that are guaranteed to remain within these safe regions, given available data, thus reducing the risk of unsafe behavior during execution. In this framework, an edge is created only if the current node lies inside the ellipsoid of the next node, ensuring that switching between controllers preserves safety. While these criteria ensure safe transitions, they are inherently conservative and can significantly limit connectivity—especially in narrow, cluttered, or high-dimensional environments. This often leads to sparse graphs and missed feasible transitions. Moreover, these methods are largely restricted to linear systems with known dynamics, where computing ellipsoidal invariant sets via LMIs is tractable \cite{greiff2024invariant}. Extending such approaches to nonlinear systems is substantially more challenging, as invariant sets are harder to compute. These combined limitations highlight the need for more flexible and scalable frameworks—particularly those that integrate data-driven modeling with control-theoretic safety guarantees—to enable reliable motion planning in complex real-world environments.

A common strategy for ensuring safety is through set-theoretic control design, which leverages $\lambda$-contractivity to guarantee that a given set remains invariant for the closed-loop system. This method ensures that, starting from within the set, the system’s states do not leave it, thus maintaining safety \cite{blanchini1999set}. However, as the complexity of the system or the admissible set grows, it becomes increasingly challenging to make the entire admissible set invariant. In practice, admissible sets represent regions where the system states are allowed to evolve, often defined by physical limitations and environmental constraints. Designing controllers that make complex admissible sets invariant is a difficult task, and the resulting invariant set is typically a subset of the admissible set, whose size depends on the richness of the available data and the control structure \cite{bisoffi2020data,de2021low}.  Partitioning complex admissible sets into smaller, disjoint subsets is a promising solution for overcoming this limitation. These partitioning-based methods enable the design of controllers that ensure safety within each partition, especially when linear feedback alone cannot guarantee invariance \cite{nguyen2023,nguyen2024}. However, these methods are typically limited to systems with known dynamics since they require an accurate model of the system. In practical scenarios where the system model is unavailable or difficult to obtain, there is a growing need for data-driven approaches that can ensure safety using only available data, without relying on a predefined model.

In recent years, data-driven control strategies have been explored extensively for achieving autonomy in complex systems. These strategies can be broadly categorized into indirect and direct methods \cite{hou2013model}. Indirect methods first identify system models from data and then use these models for controller design \cite{wang2016indirect}. In contrast, direct methods learn control policies directly from data without the need for system modeling \cite{bisoffi2022controller}. Direct data-driven approaches have shown great success in linear time-invariant (LTI) systems, avoiding the inaccuracies introduced during model identification \cite{modares2023data}. However, extending these methods to nonlinear systems is far more challenging due to the complex dynamics and nonlinearities involved. Recent works have proposed semidefinite programming (SDP)-based techniques to achieve stabilizing, direct control for nonlinear systems using finite data \cite{luppi2022data}. These methods, while promising, often require canceling all nonlinearities, which can increase control effort or ignore beneficial nonlinear effects \cite{guo2023learning}. Moreover, they typically do not incorporate safety guarantees, limiting their effectiveness in practical applications where ensuring safety is crucial \cite{de2023learning,guo2024data}.

In this paper, we propose a novel data-driven motion-planning algorithm that ensures safety for nonlinear systems by leveraging overlapping invariant convex hulls of ellipsoidal sets. Unlike traditional methods, our approach does not require a known system model or explicit state-space representation. Instead, it relies on a purely data-driven framework to compute control gains and generate safe transitions between sampled states within the admissible region. The key contributions of this paper are summarized as follows:
\begin{itemize}
    \item We introduce a data-driven motion-planning algorithm that guarantees safety by constructing convex hulls of ellipsoidal sets for nonlinear systems.
    \item We develop a systematic method to check for safe transitions between sampled states by verifying the intersection of invariant sets.
    \item We propose an efficient feedback control strategy that ensures the system remains within the admissible region while driving it toward the target state and minimizing the nonlinear residuals.
\end{itemize}

The proposed approach is well-suited for nonlinear systems with unknown dynamics and offers a framework for safe motion planning in complex environments. We demonstrate the effectiveness of our algorithm through extensive simulations, highlighting its ability to maintain safety in challenging scenarios.

\noindent \textbf{Notations}: Throughout this paper, the identity matrix of size \( n \times n \) is denoted by \( I_n \), and \( \mathbf{0}_n \) represents the \( n \times n \) zero matrix. The set of real symmetric \( d \times d \) matrices is denoted by \( \mathbb{S}^d \). For any matrix \( A \), \( A_i \) refers to its \( i \)-th row, and \( A_{ij} \) denotes the element in the \( i \)-th row and \( j \)-th column. When matrices or vectors \( A \) and \( B \) have the same dimensions, the notation \( A (\leq, \geq) B \) represents elementwise inequality, i.e., \( A_{ij} (\leq, \geq) B_{ij} \) for all \( i \) and \( j \).

The trace, largest eigenvalue, and transpose of a matrix \( A \) are denoted by \( \operatorname{Tr}(A) \), \( \lambda_{\max}(A) \), and \( A^\top \), respectively. The spectral and Frobenius norms of \( A \) are denoted by \( |A|_2 \) and \( |A|_F \). For symmetric matrices, the symbol \( (*) \) is used to indicate the symmetric completion of a block to preserve matrix symmetry. For a matrix \( Q \), the notation \( Q (\preceq, \succeq) 0 \) indicates that \( Q \) is negative or positive semi-definite, respectively. For a set \( \mathcal{S} \) and a non-negative scalar \( \mu \), the scaled set \( \mu \mathcal{S} \) consists of all elements \( \mu x \), where \( x \in \mathcal{S} \).

A directed graph is denoted by \( G = (V, E) \), where \( V \) is a finite set of vertices and \( E \) is a set of directed edges. Each edge \( (u, v) \in E \) is an ordered pair, where \( u \) is the tail and \( v \) is the head, indicating the direction of traversal. A path in \( G \) is an ordered sequence of vertices such that each consecutive pair is connected by a directed edge, preserving directionality. The goal of a graph search algorithm is to identify a valid path between specified vertices while satisfying predefined constraints.

The convex hull generated by sets \( \mathcal{S}_1, \mathcal{S}_2, \ldots, \mathcal{S}_n \) is denoted as \( \mathcal{S} = \operatorname{Co}(\mathcal{S}_1, \mathcal{S}_2, \ldots, \mathcal{S}_n) \). Any point \( x \in \mathcal{S} \) can be written as a convex combination of elements \( x_i \in \mathcal{S}_i \), specifically:
\begin{equation}\label{eq.combination}
x = \alpha_1 x_1 + \alpha_2 x_2 + \cdots + \alpha_n x_n,
\end{equation}
where \( \alpha_i \in [0, 1] \) for all \( i = 1, \dots, n \), and \( \sum_{i=1}^n \alpha_i = 1 \).

\begin{definition}\label{def_0}
For any two positive integers \( a \) and \( b \), the operation \( \operatorname{mod}(a,b) \) returns the remainder when \( a \) is divided by \( b \). Consider a finite set with \( M \) elements. The rotational indexing function \( \operatorname{R_M}(i) \) maps index \( i \) to a new index \( j \) in a cyclic manner. In this work, the mapping is defined as:
\begin{equation}\label{eq.mapping}
j = \operatorname{R_M}(i) = \operatorname{mod}(i + M - 2, M) + 1.
\end{equation}
\end{definition}


\begin{lemma}[\cite{poursafar2010model}]\label{lemma_1} 
Let $M$ and $N$ be real constant matrices, and let $P$ be a symmetric positive definite matrix. For any scalar $\epsilon > 0$ and $\mu \geq \lambda_{\max}(P)$, the following inequality holds \begin{equation} 
\begin{aligned} M^\top P N + N^\top P M & \leq \epsilon M^\top P M + \epsilon^{-1} N^\top P N \\ 
& \leq \epsilon M^\top P M + \epsilon^{-1} \mu N^\top N. 
\end{aligned} 
\end{equation} 
\end{lemma}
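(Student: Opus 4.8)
The plan is to establish both inequalities by the standard completion‑of‑squares argument (a matrix version of Young's inequality), working throughout in the Loewner order; here all the inequalities in the statement are read as matrix inequalities, so $M$ and $N$ are assumed to share their column count and $P$ to match their common row dimension, which makes $M^\top P N$ and $N^\top P M$ square of the same size.

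\textbf{First inequality.} I would start from the fact that, since $P \succeq 0$, any matrix of the form $R^\top P R$ is positive semidefinite. Taking $R = \sqrt{\epsilon}\,M - \tfrac{1}{\sqrt{\epsilon}}\,N$ (legitimate because $\epsilon > 0$) gives
\begin{equation}\label{eq.plan_cs}
\Bigl(\sqrt{\epsilon}\,M - \tfrac{1}{\sqrt{\epsilon}}\,N\Bigr)^{\!\top} P \Bigl(\sqrt{\epsilon}\,M - \tfrac{1}{\sqrt{\epsilon}}\,N\Bigr) \;=\; \epsilon\,M^\top P M - M^\top P N - N^\top P M + \epsilon^{-1} N^\top P N \;\succeq\; 0,
\end{equation}
and rearranging \eqref{eq.plan_cs} yields exactly $M^\top P N + N^\top P M \preceq \epsilon M^\top P M + \epsilon^{-1} N^\top P N$. (Equivalently, one could pre‑ and post‑multiply by an arbitrary vector $x$, set $u = Mx$, $v = Nx$, and apply the scalar bound $2\langle P^{1/2}u, P^{1/2}v\rangle \le \epsilon\|P^{1/2}u\|^2 + \epsilon^{-1}\|P^{1/2}v\|^2$; the congruence form above is cleaner.)

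\textbf{Second inequality.} It remains only to dominate the term $\epsilon^{-1} N^\top P N$. Since $P$ is symmetric and $\mu \ge \lambda_{\max}(P)$, we have $\mu I - P \succeq 0$; congruence by $N$ preserves positive semidefiniteness, so $N^\top(\mu I - P)N \succeq 0$, i.e. $N^\top P N \preceq \mu\,N^\top N$. Multiplying by $\epsilon^{-1} > 0$ and adding $\epsilon M^\top P M$ to both sides chains directly onto the first inequality to give $\epsilon M^\top P M + \epsilon^{-1} N^\top P N \preceq \epsilon M^\top P M + \epsilon^{-1}\mu\,N^\top N$, completing the two‑step bound.

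\textbf{Anticipated difficulty.} There is essentially no substantive obstacle: the proof is a two‑line algebraic manipulation. The only points requiring care are bookkeeping ones — ensuring dimensional compatibility so that each product is well defined and square, and making sure every step is a genuine congruence transformation or a sum of positive semidefinite matrices (so that the Loewner ordering is actually preserved) rather than a scalar argument applied blindly to matrices.
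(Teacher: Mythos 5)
Your proof is correct: the completion-of-squares congruence $(\sqrt{\epsilon}\,M-\epsilon^{-1/2}N)^\top P(\sqrt{\epsilon}\,M-\epsilon^{-1/2}N)\succeq 0$ gives the first bound, and $N^\top(\mu I-P)N\succeq 0$ gives the second. The paper itself offers no proof of this lemma (it is imported by citation from \cite{poursafar2010model}), and your argument is the standard one that the cited result rests on, so there is nothing substantive to compare against.
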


The following definitions are used to describe admissible and safe sets in this paper.


\begin{definition}\label{def_2}
A polytope is the intersection of a finite number of half-spaces and is expressed as 
\begin{equation}\label{eq.polyhedral_set}
    \mathcal{S} = \{ x \in \mathbb{R}^n \mid F x \leq g \},
\end{equation}
where $F \in \mathbb{R}^{m \times n}$ and $g \in \mathbb{R}^m$ define the constraints of the polytope.
\end{definition}

\begin{definition}\label{def_3}
An ellipsoidal set, denoted by $\mathcal{E}(P, c)$, is defined as  
\begin{equation}
    \mathcal{E}(P, c) := \{ x \in \mathbb{R}^n \mid (x - c)^\top P^{-1} (x - c) \leq 1 \},
\end{equation}
where $P^{-1} \in \mathbb{R}^{n \times n}$ is a symmetric positive definite matrix, and $c \in \mathbb{R}^n$ represents the center of the ellipsoid.
\end{definition}

\section{Problem Statement}
Consider the following nonlinear system
\begin{align}\label{eq.system}
x_{k+1} &= f(x_k, u_k) \nonumber \\
y_k &= Cx_k,
\end{align}
where \(x_k\in\mathbb{R}^{n}\) and \(u_k\in\mathbb{R}^{m}\) are the state and input at time \(k\), respectively, and \(f(\cdot,\cdot)\) represents the unknown nonlinear dynamics. The output \(y_k\in\mathbb{R}^{2}\) is defined as the subset of state components that correspond to the system's 2-D position, i.e.,\ \(y_k = x_{\text{pos},k}\), and 
\(C=\bigl[I_{2}\;\;0_{2\times(n-2)}\bigr]\).

\begin{assumption}\label{assumption_2}
The nonlinear function $f(x_k,u_k)$ satisfies the Lipschitz condition $f(x_k,u_k)^\top f(x_k,u_k) \leq [x^\top_k u^\top_k] Q^\top Q [x^\top_k u^\top_k]^\top$, where $Q$ is a known constant matrix with compatible dimensions, though it is not necessarily nonsingular.
\end{assumption}


\begin{definition}[Admissible Set]\label{definition_6}
Consider the discrete‑time nonlinear system \eqref{eq.system} subject to state‑constraint functions \(g(x)\le 0\).  
The \emph{admissible set} is
\begin{equation}
\mathcal{X}\triangleq\bigl\{\,x\in\mathbb{R}^{n}\;\bigl|\;g(x)\le 0\bigr\},
\end{equation}
i.e., the collection of all states that satisfy every physical and operational constraint of the system.
\end{definition}


\begin{definition}[Safe Set]\label{definition_7}
Let the closed‑loop system be
\begin{equation}
x_{k+1}=f_\text{cl}(x_k)\quad\text{with}\quad
    f_\text{cl}(x)\triangleq f\bigl(x,K(x)\bigr),
\end{equation}
and let \(\mathcal{X}\) be the admissible set from Definition \ref{definition_6}.  
A set \(\mathcal{S}\subseteq\mathcal{X}\) is called a safe set if
\begin{equation}
x\in\mathcal{S}\;\Longrightarrow\;f_\text{cl}(x)\in\mathcal{S}.
\end{equation}

That is, once the state enters \(\mathcal{S}\), it remains there for all future time steps under the implemented feedback controller \(u=K(x)\).
\end{definition}

\begin{assumption}\label{Assumption_5} 
The admissible set is assumed to be a time-invariant polyhedral set denoted by ${\cal{S}}(F, g)$, as described in \eqref{eq.polyhedral_set}. 
\end{assumption}

\begin{assumption}
The state \( x_k \) is subject to constraints
\begin{equation}
    x_k \in \mathcal{X}.
\end{equation}
The state constraint set \( \mathcal{X} \subseteq \mathbb{R}^n \) is generally non-convex; however, it can be represented as the union of convex subsets
\begin{equation}
    \mathcal{X} = \bigcup_{\kappa \in \mathcal{I}_\mathcal{X}} \mathcal{X}_\kappa,
\end{equation}
where \( \mathcal{I}_\mathcal{X} \) is a finite index set (\(|\mathcal{I}_\mathcal{X}| < \infty\)), and each subset \( \mathcal{X}_\kappa \subseteq \mathbb{R}^n \) is a compact polytope defined by a set of linear inequalities
\begin{equation}
    \mathcal{X}_\kappa(F,g) = \left\{ x \mid F_{\mathcal{X}_\kappa}(l)^T x \leq g_{\mathcal{X}_\kappa}(l), \quad l = 1, \dots, |\mathcal{I}_\mathcal{X}| \right\}.
\end{equation}
\end{assumption}



We isolate the planar position by letting
\begin{equation}
x_{\text{pos},k}
\;=\;
\begin{bmatrix}x_{1,k}\\[2pt]x_{2,k}\end{bmatrix}\in\mathbb{R}^{2},
\end{equation}
so, the output \(y_k\) is exactly the position vector already denoted by
\(x_{\text{pos},k}\).

\begin{definition}[Projected admissible region]\label{def:Xpos}
Given the full‑state constraint set \(\mathcal{X}\subset\mathbb{R}^{n}\),
the \emph{projected admissible region} in the 2‑D position subspace is
\[
    \mathcal{X}_{\text{pos}}
    \;=\;
    \bigl\{\,Cx\in\mathbb{R}^{2}\mid x\in\mathcal{X}\bigr\},
\]
i.e., the image of \(\mathcal{X}\) under the projection matrix
\(C\).
\end{definition}

This paper presents a data-driven motion-planning algorithm for nonlinear systems that ensures safe navigation by steering the position state \( x_{pos,k} \) from an initial point to a target through overlapping invariant convex hulls of ellipsoids. These ellipsoidal sets are data-driven, lie within the non-convex admissible region \( \mathcal{X} \), and guarantee safety and constraint satisfaction during transitions.

Unlike traditional planners based on Euclidean distance, this method leverages state measurements and set-based conditions to determine safe transitions. The algorithm computes a sequence of waypoints and corresponding control gains that respect system dynamics while ensuring safety.

We formally describe the problem as follows.

\begin{problem}\label{prob:safe_rrt}
Consider the discrete-time nonlinear system described by \eqref{eq.system}. The goal is to find a sequence of waypoints $\{p_0,p_1,\dots,p_{N_w}\}$ and nonlinear state-feedback control gains such that
\[
  p_{N_w}\in\mathcal{B}_{r_f}(x_{pos,f})
  \;=\;\bigl\{x\in\mathbb{R}^n \,\bigl|\,\|x_{pos}-x_{pos,f}\|\le r_f\bigr\},
\]
where \(r_f>0\) is a user‑specified termination radius that defines the acceptable neighbourhood around the target position \(x_{pos,f}\). At all times the system state must remain within the admissible region $\mathcal{X}$. The steps to achieve this are described as follows.
\end{problem}

\begin{enumerate}
    \item \textbf{Sample and certify a candidate waypoint \(w_i\)}: Randomly sample the admissible region \(\mathcal{X}_{\text{pos}}\) to generate a candidate waypoint \(p_i\).  
      Using the pre‑collected state–input data set \(\mathcal{D}=\{(x_j,u_j)\}_{j=1}^N\) gathered offline during representative operating runs, solve the data‑driven LMIs to obtain a local feedback gain and construct the invariant set \(\mathcal{C}_i\) (a convex hull of ellipsoids) that guarantees safety and constraint satisfaction.
    
    \item \textbf{Check for safe transitions}: Verify whether the current invariant set overlaps with at least one previously accepted set. If an overlap exists, compute the intermediate safe point and add the candidate waypoint to the graph. Otherwise, discard the candidate.
    
    \item \textbf{Repeat until convergence}: Iteratively sample and certify candidate waypoints, and check for safe transitions from the current node to the new waypoint. Continue this process to build a sequence of connected invariant sets until a waypoint satisfying \(p_{N_w}\in\mathcal{B}_{r_f}(x_{pos,f})\) is reached.
\end{enumerate}




By iteratively constructing invariant sets, the algorithm ensures safety and feasibility at each planning step. Transitions between waypoints are verified through convex hull intersections, enabling smooth and reliable progression toward the goal.

Unlike conventional motion-planning methods that rely on predefined models, this approach is entirely data-driven, leveraging measured state information to handle nonlinearities and non-convex constraints during both planning and control design. This makes it particularly suitable for systems with complex or unknown dynamics.

Theorem~\ref{thm:path_existence} provides sufficient conditions for the existence of a feasible path that solves Problem~1. It links the solvability of the motion planning problem to the existence of a valid path in the constructed graph representation based on invariant sets.

\begin{theorem}[\cite{danielson2016path}]\label{thm:path_existence}
Consider a motion planning problem where the system state evolves within invariant sets. Construct a graph \( G = (V, E) \) by connecting waypoints obtained through successive solutions of Problem 1. If there exists a path in \( G \) that originates from the initial waypoint \( p_0 = x_{pos,0} \) and reaches a node \(p_{N_w}\in\mathcal{B}_{r_f}(x_{\mathrm{pos},f})\), then Problem 1 is solvable for the prescribed neighbourhood radius \(r_f\).
\end{theorem}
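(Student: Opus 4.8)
The plan is to argue by induction along the edges of the path in $G$, combining three facts that are guaranteed by construction: (i) each node $p_i$ carries an invariant set $\mathcal{C}_i\subseteq\mathcal{X}$ together with a feedback law $K_i$ under which $\mathcal{C}_i$ is safe in the sense of Definition~\ref{definition_7}; (ii) the closed‑loop dynamics $f(x,K_i(x))$ steer the position component toward the waypoint $p_i$ while keeping the state in $\mathcal{C}_i$; and (iii) by step~2 of Problem~1 an edge $(p_i,p_{i+1})\in E$ exists only when $\mathcal{C}_i\cap\mathcal{C}_{i+1}\neq\emptyset$, in which case an intermediate point strictly inside this overlap is produced.

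First I would fix notation: let $p_0\to p_1\to\cdots\to p_{N_w}$ be the given path, with sets $\mathcal{C}_0,\dots,\mathcal{C}_{N_w}$ and gains $K_0,\dots,K_{N_w}$. For the base case, since $p_0=x_{\mathrm{pos},0}$ is the center of $\mathcal{C}_0$, the initial state lies in $\mathcal{C}_0$; apply $K_0$. For the inductive step, suppose that at some finite time $k_i$ the state has entered $\mathcal{C}_i$ and the controller $K_i$ is active. By (i) the trajectory remains in $\mathcal{C}_i\subseteq\mathcal{X}$, and by (ii) it converges toward $p_i$; since by (iii) the overlap $\mathcal{C}_i\cap\mathcal{C}_{i+1}$ contains a point in its interior, the trajectory enters $\mathcal{C}_i\cap\mathcal{C}_{i+1}\subseteq\mathcal{C}_{i+1}$ at some finite time $k_{i+1}>k_i$. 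Switching to $K_{i+1}$ at $k_{i+1}$, invariance of $\mathcal{C}_{i+1}$ keeps the state in $\mathcal{C}_{i+1}\subseteq\mathcal{X}$ until the next switch. After at most $N_w$ switches the state lies in $\mathcal{C}_{N_w}$ under $K_{N_w}$, which drives $x_{\mathrm{pos}}$ into $\mathcal{B}_{r_f}(x_{\mathrm{pos},f})$; hence both the terminal condition $p_{N_w}\in\mathcal{B}_{r_f}(x_{\mathrm{pos},f})$ and, at every time step, the admissibility constraint $x_k\in\bigcup_i\mathcal{C}_i\subseteq\mathcal{X}$ hold. This exhibits the required waypoint sequence and gains, so Problem~1 is solvable.

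The main obstacle is the finite‑time arrival into the overlap $\mathcal{C}_i\cap\mathcal{C}_{i+1}$: mere asymptotic convergence toward $p_i$ only guarantees that the trajectory eventually enters every neighborhood of $p_i$, not necessarily the intersection region, unless that region has positive volume near a point the trajectory actually visits. I would close this gap by invoking the $\lambda$‑contractivity of the ellipsoidal/convex‑hull invariant sets used in the construction — so that sublevel sets about $p_i$ shrink at a geometric rate — together with the fact that the intersection test in step~2 certifies the intermediate node to be strictly interior to both $\mathcal{C}_i$ and $\mathcal{C}_{i+1}$, hence surrounded by a ball of positive radius. A contracting trajectory aimed at $p_i$ must enter that ball in finitely many steps, which makes every switch well defined and the whole argument finite. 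Since the statement is quoted from \cite{danielson2016path}, I would keep the exposition at this level and refer there for the contractivity bookkeeping.
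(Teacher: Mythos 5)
The paper does not actually prove Theorem~\ref{thm:path_existence}; it is imported from \cite{danielson2016path} without proof, so there is no in-paper argument to compare against. Your overall route --- induction along the edges of the path, using invariance of each $\mathcal{C}_i$ for constraint satisfaction and the certified overlaps for switching --- is the standard and intended one. However, the step you yourself flag as ``the main obstacle'' is not actually closed by the fix you propose. You argue that under $K_i$ the trajectory converges toward the waypoint $p_i$, and then conclude it must enter $\mathcal{C}_i\cap\mathcal{C}_{i+1}$ because the intermediate node is strictly interior to both sets and hence surrounded by a ball of positive radius. But a $\lambda$-contractive trajectory converging to $p_i$ is only guaranteed to enter every neighborhood of $p_i$; it has no reason to visit a ball centered at a \emph{different} point (the intermediate node), and in general $p_i\notin\mathcal{C}_i\cap\mathcal{C}_{i+1}$. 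As written, the inductive step fails exactly where you predicted it might.

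The mechanism that closes this gap in the paper's construction (Algorithms 2, 3 and 7, and Remark~\ref{remark_2}) is that the parent's gain is applied with the \emph{intermediate node as the reference}: the error dynamics \eqref{eq.data_driven_error_system} and the contractivity certificate are independent of $r$, so re-targeting $K_i$ at the intermediate point $p_{s,o}$ makes $p_{s,o}$ the closed-loop steady state while preserving $\lambda'$-contraction of the (translated) invariant ellipsoid. Geometric contraction toward $p_{s,o}$, which lies in the interior of $\mathcal{C}_{i+1}$, then gives finite-time entry into $\mathcal{C}_{i+1}$, and the switch to $K_{i+1}$ is well defined. So the induction should be carried out over the refined waypoint sequence that interleaves the intermediate nodes, with each leg's target being the next intermediate node rather than the sampled waypoint itself. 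One further loose end worth stating explicitly: admissibility during the re-targeted leg requires that the parent's ellipsoid translated to $p_{s,o}$ (or the intermediate ellipsoid $\mathcal{E}_o$, which is contained in the union of the two projected ellipsoids) remain inside $\mathcal{X}$; this is what the containment conditions in the SDP \eqref{eq.intersection_SDP} are for, and your proof should invoke it rather than only the invariance of $\mathcal{C}_i$ about $p_i$.
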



In the following sections, we present a complete data-driven framework to solve the safe motion planning problem for nonlinear systems introduced in Problem~1. Section III introduces the data-driven representation of the nonlinear dynamics using lifted coordinates and integral control augmentation. Section IV formulates the safe control problem using invariant ellipsoidal sets and provides the necessary LMIs to compute safe controllers. Section V extends this method to construct convex hulls of ellipsoids for less conservative and more flexible motion planning. Each section progressively builds toward enabling safe transitions between waypoints in a data-driven setting, culminating in a motion planning algorithm that guarantees safety without requiring an explicit model of the system dynamics.

\section{Data-Driven Representation}

Inspired by the work in \cite{de2023learning}, we aim to represent the system in a purely data-driven framework without relying on explicit system models. By leveraging available data, we directly represent the system dynamics and design suitable control strategies. While \cite{de2023learning} addresses set-point tracking for continuous-time control-affine systems, we extend this framework to the motion-planning problem for general discrete-time nonlinear (non-affine) systems, where the system must safely track a sequence of waypoints generated during planning.

We define the stacked vector of states and inputs as $\xi := \begin{bmatrix} x \\ u \end{bmatrix}$ and consider the following assumption.

\begin{assumption}
A continuously differentiable function $Z: \mathbb{R}^{n+m} \to \mathbb{R}^{n_z}$ is given, such that $f(x, u) = A Z(\xi)$ for some matrix $A \in \mathbb{R}^{n \times n_z}$.
\end{assumption}

Using Assumption 3, \eqref{eq.system} is written as follows
\begin{equation}
x_{k+1} = AZ(\xi_k)
\end{equation}
where
\begin{equation} 
Z(\xi_k) = \begin{bmatrix} \xi_k \\ S(\xi_k) \end{bmatrix}, 
\end{equation}
with $S(\xi_k)$ serving as a dictionary of functions that approximates the system dynamics. Also, $A = \begin{bmatrix} \bar{A} & \hat{A} \end{bmatrix}$
with \, $\bar{A} \in \mathbb{R}^{n \times (n+m)}$ \, and \, $\hat{A} \in \mathbb{R}^{n \times (n_z-n-m)}$.

We redefine $\xi$ as the new state variable and incorporate integral control. Specifically, we introduce a new control input $v_k \in \mathbb{R}^m$ and modify the original input dynamics through an input‑increment formulation. Hence, the input update is given by
\begin{equation} 
u_{k+1} = u_k + T_s v_k, 
\end{equation} 
where $T_s$ denotes the sampling time. Consequently, the augmented dynamics, incorporating the input‑increment, become
\begin{equation} 
\xi_{k+1} = \begin{bmatrix} \bar{A} & \hat{A} \\
\begin{bmatrix} 0 & I \end{bmatrix} & 0 \end{bmatrix} \begin{bmatrix} \xi_k \\ S(\xi_k) \end{bmatrix} + \begin{bmatrix} 0 \\ T_s I \end{bmatrix} v_k. \end{equation}

Since the goal is to design a motion planner for tracking a desired waypoint $r_k \in \mathbb{R}^2$, we introduce the tracking error as
\begin{equation} 
e_k := y_k - r = x_{pos,k} - r_k. 
\end{equation}

The objective is to minimize this error while ensuring that the system remains within a safe set throughout its evolution. To achieve accurate tracking, we incorporate another integral action, which helps eliminate steady-state errors. We define the augmented state $\zeta$ that includes the integral of the tracking error as well
\begin{equation} 
\zeta_k := \begin{bmatrix} \xi_k \\ \eta_k 
\end{bmatrix}, \quad \eta_{k+1} = \eta_k + T_se_k,
\end{equation}
where $\eta_k \in \mathbb{R}^2$ represents the integral of the error.

Define
\begin{align}
\mathcal{Z}(\xi_k, \eta_k) := \begin{bmatrix} \xi_k \\ \eta_k \\ S(\xi_k) \end{bmatrix} \in \mathbb{R}^{n_{\mathcal{z}}}.
\end{align}

Then, the augmented state update equations become
\begin{align}\label{eq.augmented_sys}
& \zeta_{k+1} = \begin{bmatrix}
   \xi_{k+1} \\ \eta_{k+1}
\end{bmatrix} =
\begin{bmatrix}
    \begin{bmatrix} x_{k+1} \\ u_{k+1} \end{bmatrix} \\ \eta_{k+1}
\end{bmatrix} = \nonumber \\
& \begin{bmatrix}
    \bar{A} & 0 & \hat{A} \\
    [0 \; I] & 0 & 0 \\
    [T_sC \; 0] & I & 0
\end{bmatrix}
\begin{bmatrix}
    \begin{bmatrix} x_k \\ u_k \end{bmatrix} \\ \eta_k \\ S(\xi_k)
\end{bmatrix} +
\begin{bmatrix}
    0 \\ T_sI \\ 0
\end{bmatrix}
v_k -
 \begin{bmatrix}
    0 \\ 0 \\ T_sI
\end{bmatrix}r \nonumber \\
= \; & \mathcal{A}\mathcal{Z}(\xi_k,\eta_k)+\mathcal{B}v_k-\mathcal{I}r
\end{align}


According to \eqref{eq.augmented_sys}, one has
\begin{equation}\label{eq.system_for_data}
\begin{bmatrix}
    \xi_{k+1} \\ \eta_{k}+T_sx_{pos,k}
\end{bmatrix} =\mathcal{A}\mathcal{Z}(\xi_k,\eta_k)+\mathcal{B}v_k
\end{equation}

By applying
\begin{equation}\label{eq.data_V0}
V_0 := \begin{bmatrix} v_0 & v_1 & \cdots & v_{N-1} \end{bmatrix} \in \mathbb{R}^{m \times N}
\end{equation}
to the system \eqref{eq.system_for_data}, $N+1$ samples of the state vectors are collected as follows
\begin{align}\label{eq.data_THETA}
\Xi &:= \begin{bmatrix} \xi_0 & \xi_1 & \cdots & \xi_N \\ 
\eta_0 & \eta_1 & \cdots & \eta_N \end{bmatrix} \in \mathbb{R}^{(n+m+2) \times (N+1)}
\end{align}
where these collected samples are then organized as follows 
\begin{align}\label{eq.data_Z0}
\mathcal{Z}_0 &:= \begin{bmatrix} \xi_0 & \xi_1 & \cdots & \xi_{N-1} \\ 
\eta_0 & \eta_1 & \cdots & \eta_{N-1} \\
S(\xi_0) & S(\xi_1) & \cdots & S(\xi_{N-1}) \end{bmatrix} \in \mathbb{R}^{{n_{\mathcal{z}}} \times N} \\
\Xi_1 &:= \begin{bmatrix} \xi_1 & \xi_2 & \cdots & \xi_N \\ 
\eta_1+T_sx_{pos,1} & \eta_2+T_sx_{pos,2} & \cdots & \eta_N+T_sx_{pos,N} \end{bmatrix} \\
& \in \mathbb{R}^{(n+m+2) \times N} \nonumber.
\end{align}




Given that the data sequences $V_0$, $\Xi_1$, and $\mathcal{Z}_0$ fulfill \eqref{eq.system_for_data}, one has
\begin{equation}\label{eq.closed_loop}
\Xi_1 = \mathcal{A} \mathcal{Z}_0 + \mathcal{B} V_0. 
\end{equation}

\begin{assumption}\label{assumption_3}
The data matrix $\mathcal{Z}_0$ is assumed to be full row rank, and the number of collected samples must satisfy $N \geq {n_{\mathcal{z}}} + 1$.
\end{assumption}

Consider the existence of matrices $\mathcal{K} \in \mathbb{R}^{m \times {n_{\mathcal{z}}}}$ and $\mathcal{G}_K \in \mathbb{R}^{N \times {n_{\mathcal{z}}}}$ under Assumption 4 such that
\begin{equation}\label{eq.lemma_3}
\begin{bmatrix}
\mathcal{K} \\ I_{n_z} \end{bmatrix}=\begin{bmatrix} V_0 \\ \mathcal{Z}_0\end{bmatrix} \mathcal{G}_K,
\end{equation}
where $\mathcal{G}_K = [\mathcal{G}_{K,l} \quad \mathcal{G}_{K,nl}]$ with $\mathcal{G}_{K,l} \in \mathbb{R}^{N \times n+m+2}$ and $\mathcal{G}_{K,nl} \in \mathbb{R}^{N \times ({n_{\mathcal{z}}} - n - m - 2)}$.

By multiplying both sides of \eqref{eq.closed_loop} by $\mathcal{G}_K$ from the right and using the result from \eqref{eq.lemma_3}, one gets
\begin{equation}\label{eq.closed_loop_data}
\mathcal{A} + \mathcal{B} \mathcal{K} = \Xi_1 \mathcal{G}_K. 
\end{equation}




The steady‑state dynamics of the augmented system, for a constant reference \(r\), are
\begin{equation}
\begin{bmatrix}\xi^{*}(r)\\[2pt]\eta^{*}(r)\end{bmatrix}
    =\bigl(\mathcal{A}+\mathcal{B}\mathcal{K}\bigr)\,
      \mathcal{Z}\bigl(\xi^{*}(r),\eta^{*}(r)\bigr)-\mathcal{I}\,r,
\end{equation}
where \(\xi^{*}(r)\) and \(\eta^{*}(r)\) denote the steady‑state values of the state and integral state as functions of the reference. For brevity, we omit the explicit argument \(r\) in subsequent expressions.

To analyze the tracking error dynamics, we define the error terms as
\begin{equation} 
\xi_{e,k} = \xi_k - \xi^*, \quad \eta_{e,k} = \eta_k - \eta^*. 
\end{equation}

Substituting these into the closed-loop system, we obtain
\begin{align}\label{eq.final_error_system}
\zeta_{e,k+1} &= \begin{bmatrix} \xi_{e,k+1} \\ \eta_{e,k+1} \end{bmatrix} = \begin{bmatrix} \xi_{k+1} \\ 
\eta_{k+1} \end{bmatrix} - \begin{bmatrix} \xi^* \\ \eta^* \end{bmatrix} \notag \\
&= (\mathcal{A} + \mathcal{B}\mathcal{K}) (\mathcal{Z}(\xi_k, \eta_k) - \mathcal{Z}(\xi^*, \eta^*)). 
\end{align}

Owing to \eqref{eq.closed_loop_data}, the closed-loop error dynamics are represented as the following data-based form
\begin{equation}\label{eq.data_driven_error_system} \zeta_{e,k+1} = \Xi_1 \mathcal{G}_{K,l} \zeta_{e,k} + \Xi_1 \mathcal{G}_{K,nl} (S(\xi_k) - S(\xi^*)), 
\end{equation}

This formulation captures the evolution of the tracking error using a fully data-driven framework. 

The next section formulates and solves the safe motion planning problem using invariant ellipsoids for each waypoint. This approach ensures that the system remains within predefined safety constraints while following the desired waypoint. Subsequently, the method is extended to the convex hull of ellipsoids, allowing for a more flexible and less conservative representation of the feasible motion space in Section V. This extension enhances the adaptability of the motion planning strategy by providing a more accurate approximation of the safe region.

\section{Data-Driven Motion Planning Using Invariant Ellipsoids}


This section presents a data-driven framework for safe motion planning that relies on invariant ellipsoids to ensure safety guarantees and dynamic feasibility at every step of the planning process. Unlike traditional sampling-based methods, such as Rapidly-exploring Random Trees (RRT), which focus primarily on kinematic feasibility and ignore system dynamics, the proposed approach directly integrates data-driven control synthesis with path planning.

Specifically, our approach can be viewed as a safe and dynamics-aware extension of RRT tailored for uncertain nonlinear systems that rely on collected data for planning. By constructing invariant ellipsoids around each sampled point in the 2D position space \( \mathcal{X}_{pos} \), the method ensures that the full system state \( x_k \in \mathbb{R}^n \) remains within a certified safe region, while the position state \( x_{pos,k} \) transitions between waypoints. The framework is divided into several key stages: identifying the convex admissible set, ensuring safety guarantees using single ellipsoids, designing safe transitions between successive ellipsoids, and executing the planned trajectory using the computed control gains. Each step is outlined in detail in the following subsections.

\subsection{Identifying the Convex Admissible Set}
In this subsection, we focus on identifying the admissible set for each sampled point \( p_{s} =[x_s,y_s]^T \in \mathbb{R}^2 \). The overall admissible set is typically non-convex due to environmental constraints and the presence of obstacles. To facilitate motion planning, we partition this non‑convex admissible set into multiple polytopes. Each polytope is described by the linear‑inequality set $\mathcal{X}_{\text{pos},\kappa}(F_{\text{pos}},g_{\text{pos}})$, which specifies the admissible‑region boundaries.

Given a sampled point \( p_{s} \), we determine which convex polytope it belongs to and update the corresponding polyhedral constraints. If multiple polytopes contain the sampled point, a random selection is made to ensure flexibility in planning. This step is critical as it forms the basis for constructing the invariant ellipsoids used in the subsequent stages.

The steps for identifying the admissible set are summarized in Algorithm 1.

\begin{algorithm}
\caption{Identifying the Admissible Set}
\label{alg:admissible_set}
\begin{algorithmic}[1]
\Require Sampled point $p_{s} = [x_s,y_s]^T$, obstacle dimension $\mathcal{O}$, environment bounds $[x_{s,\min}, x_{s,\max}, y_{s,\min}, y_{s,\max}]$.
\Ensure Polyhedral safe region $\mathcal{X}_{pos,i}(F_{pos}, g_{pos})$ around $p_{s}$.

\State \textbf{Determine} the location of $p_{s}$ relative to $\mathcal{O}$.
\State \textbf{Identify} candidate polytopes $\mathcal{X}_{pos,i}(F_{pos}, g_{pos})$ that contain $p_{s}$ for $i=1,\ldots,\nu$.
\If{multiple polytopes are found}
    \State Select a polytope $\mathcal{X}_{pos,i,\rho}(F_{pos}, g_{pos})$ randomly, where $\rho \sim \{1, \dots, \iota\}$.
    \State Assign boundaries of $\mathcal{X}_{pos,i,\rho}(F_{pos}, g_{pos})$ to $p_{s}$.
\Else
    \State Assign the boundaries of the identified polytope to $p_{s}$.
\EndIf
\State \textbf{Select} the polyhedral constraints $F_{\text{pos}}p_s \le g_{\text{pos}}$ associated with the chosen polytope.
\State \Return $(F_{\text{pos}}, g_{\text{pos}})$.
\end{algorithmic}
\end{algorithm}
The symbols \( \nu \) and \( \iota \) represent the total number of constraint sets that define the non-convex state space and the number of constraint sets that contain a given sampled location, respectively. The matrices \( F_{pos} \) and \( g_{pos} \) define the polyhedral constraint set in the 2D position space. These constraints are then lifted to form the full-state polyhedral set \( \mathcal{F} \zeta_e \leq \bar{g} \), which incorporates constraints on other states. This full-state constraint set is subsequently used for invariant ellipsoid design and control synthesis.

\begin{remark}\label{remark_1}
Although the admissible region \( \mathcal{X} \subseteq \mathbb{R}^n \) is defined over the full system state, we identify a convex admissible subset in the 2D position space, denoted by \( \mathcal{X}_{\mathrm{pos}} \subseteq \mathbb{R}^2 \), for each sampled point. This region accounts for environmental constraints such as obstacles and workspace boundaries. Once \( \mathcal{X}_{\mathrm{pos}} \) is identified, it is used to define or constrain the corresponding full-state admissible set \( \mathcal{X} \), within which an invariant ellipsoid is constructed to guarantee safety and feasibility.
\end{remark}

Once the admissible convex set for the sampled point $p_s$ is identified, the next step is to compute a feedback gain that ensures the largest possible subset of this set remains invariant. This guarantees that the system stays within the safe region at all times, providing critical safety guarantees during motion planning.

\subsection{Dynamics-Aware Safety Guarantees Using Single Ellipsoids}
In this subsection, we aim to design a data-driven state-feedback controller that ensures the full system state remains within the largest invariant ellipsoid, constructed inside the full-state admissible set corresponding to a convex region identified in the 2D position space. This step involves computing the gain that guarantees contractiveness, allowing the system to safely evolve while respecting the physical constraints imposed by the admissible set.

Before formulating the problem, it is essential to highlight the importance of contractive sets in maintaining safety. Contractive sets serve as a foundation for keeping the system’s state within specified boundaries over time, which is crucial for safety-critical applications. This framework simplifies the design of controllers that can enforce these boundaries effectively.

\begin{definition}[Contractive Set \cite{modares2023}] \label{def_4} 
A set $\mathcal{S} \subseteq \mathbb{R}^{n+m+2}$ is called $\lambda$-contractive for the system \eqref{eq.final_error_system} if, for all $k \geq 0$, $\zeta_{e,k} \in \mathcal{S}$ implies that $\zeta_{e,k+1} \in \lambda\mathcal{S}$, where $0 < \lambda \leq 1$. 
\end{definition}

The problem of data-driven safe control is formally stated below, focusing on maintaining the system’s state within predefined safe regions while ensuring compliance with system constraints.

\begin{problem}[Data-Driven Safe Control Using Single Ellipsoids]\label{Problem_1}
Consider the nonlinear error system described in \eqref{eq.final_error_system} under Assumptions 1–5. The admissible set is given by $\mathcal{S}(\mathcal{F},\bar{g})$, and consider a safe set represented by the ellipsoid $\mathcal{E}(\mathcal{P}, 0)$. Design a data-driven nonlinear state-feedback controller as 
\begin{equation}\label{nonlinear_controller}
v_k = \mathcal{K} \mathcal{Z}(\xi_k, \eta_k) = \begin{bmatrix} \mathcal{K}_l & \bar{\mathcal{K}}_l & \mathcal{K}_{nl} \end{bmatrix} \begin{bmatrix} \begin{bmatrix} x_k \\ u_k \end{bmatrix} \\ \eta_k \\ S(\xi_k) \end{bmatrix},
\end{equation}
to maximize the size of the invariant ellipsoid $\mathcal{E}(\mathcal{P}, 0) \subseteq \mathcal{S}$.
\end{problem}

To address this problem, we utilize the data-driven representation \eqref{eq.data_driven_error_system} and the collected data \eqref{eq.data_V0}--\eqref{eq.data_Z0}. The control design involves two types of gains: the nonlinear gain \( \mathcal{K}_{nl} \), which is optimized to reduce the upper bound on nonlinear residuals, and the linear gains \( \mathcal{K}_{l} \) and \( \bar{\mathcal{K}}_{l} \), which are designed to ensure that the largest possible set remains \( \lambda \)-contractive, thereby maintaining safety even in the presence of residual nonlinearities.

\begin{theorem}[Safe Control Design with Single Ellipsoids]\label{thm:single_ellipsoid}
Consider the nonlinear system \eqref{eq.final_error_system} that satisfies Assumptions 1--5 . Data are collected and arranged as equations \eqref{eq.data_V0}--\eqref{eq.data_Z0}. Let there exist matrices $\mathcal{P} \in$ $\mathbb{S}^{n+m+2}$ and $\mathcal{Y} \succeq 0$, and positive scalar $\gamma$. Consider any feasible solution of the following optimization problem
\begin{align}
& \min_{\mathcal{P}, \mathcal{Y}, \gamma, \mathcal{G}_{K,nl}} \quad \gamma - \operatorname{logdet}(\mathcal{P}) \label{eq.SDP_single} \\
& \mathrm{s. t.} \nonumber \\
& \mathcal{Z}_0\mathcal{Y} = \begin{bmatrix} \mathcal{P} \\ 0_{(n_{\mathcal{z}}-n-m-2) \times (n+m+2)} \end{bmatrix}, \label{eq.DD_CE_conditions_1} \\
& \mathcal{Z}_0\mathcal{G}_{K,nl} = \begin{bmatrix} 0_{n+m+2 \times (n_{\mathcal{z}}-n-m-2)} \\ I_{(n_{\mathcal{z}}-n-m-2)} \end{bmatrix}, \label{eq.DD_CE_conditions_2} \\
& \begin{bmatrix}
\mathcal{P} & \sqrt{1+\tau}\Xi_1 \mathcal{Y} \\
(*) & \lambda \mathcal{P}-\varepsilon(1+\tau^{-1}) \mathcal{P}
\end{bmatrix} \succeq 0, \label{eq.DD_CE_conditions_3} \\
& \begin{bmatrix}
\gamma I_{(n+m+2)} & \Xi_1 \mathcal{G}_{K,nl} \\
(*) & \gamma I_{(n_{\mathcal{z}}-n-m-2)}
\end{bmatrix} \succeq 0, \label{eq.DD_CE_conditions_4} \\
& \begin{bmatrix}
\varepsilon I_{(n+m+2)} & I_{(n+m+2)} \\
(*) & \mathcal{P}
\end{bmatrix} \succeq 0, \label{eq.DD_CE_conditions_5} \\
& \begin{bmatrix}
I_{(n+m+2)} & \gamma \mathcal{Q} \\
(*) & \mathcal{P}
\end{bmatrix} \succeq 0, \,\,\,
\begin{bmatrix}
\mathcal{P} & \mathcal{P}\mathcal{F}_l^T \\
(*) & \bar{g}_l^2
\end{bmatrix} \succeq 0, \,\,\, \forall l=1,\ldots,q. \label{eq.DD_CE_conditions_7}
\end{align}

\noindent where \( \mathcal{Q} \) is the extended form of the Lipschitz matrix \( Q \) from Assumption~1, applied to the error dynamics. Then, for some $\tau>0$ and $\varepsilon \geq \lambda_{max}(\mathcal{P}^{-1})$, Problem 2 is solved and nonlinear state-feedback gain is computed as $\mathcal{K}=[\mathcal{K}_l \quad \mathcal{K}_{nl}]=V_0[\mathcal{G}_{K,l} \quad \mathcal{G}_{K,nl}]$ where $\mathcal{G}_{K,l}=\mathcal{Y}\mathcal{P}^{-1}$.
\end{theorem}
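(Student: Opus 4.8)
The plan is to show that every feasible point of the semidefinite program \eqref{eq.SDP_single}--\eqref{eq.DD_CE_conditions_7} yields an ellipsoid $\mathcal{E}(\mathcal{P},0)$ that is simultaneously (i) $\lambda$-contractive for the data-based error dynamics \eqref{eq.data_driven_error_system}, and (ii) contained in the admissible polytope $\mathcal{S}(\mathcal{F},\bar{g})$. By Definition~\ref{def_4} and Definition~\ref{definition_7} this makes $\mathcal{E}(\mathcal{P},0)$ a safe set, and since the cost maximizes $\operatorname{logdet}(\mathcal{P})$ (hence the ellipsoid volume) while penalizing $\gamma$, the optimizer is the largest such ellipsoid, so Problem~\ref{Problem_1} is solved. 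The gain formula $\mathcal{K}=V_0\mathcal{G}_K$ with $\mathcal{G}_{K,l}=\mathcal{Y}\mathcal{P}^{-1}$ is justified first: the equality constraints \eqref{eq.DD_CE_conditions_1}--\eqref{eq.DD_CE_conditions_2} force $\mathcal{Z}_0\mathcal{G}_K=I_{n_{\mathcal{z}}}$ (Assumption~\ref{assumption_3} guarantees such $\mathcal{G}_K$ exists, as $\mathcal{Z}_0$ has full row rank with $N\ge n_{\mathcal{z}}+1$), so right-multiplying \eqref{eq.closed_loop} by $\mathcal{G}_K$ and using $V_0\mathcal{G}_K=\mathcal{K}$ from \eqref{eq.lemma_3} gives $\mathcal{A}+\mathcal{B}\mathcal{K}=\Xi_1\mathcal{G}_K$, which is precisely the identity behind \eqref{eq.data_driven_error_system}.

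For contractivity, take $V(\zeta_e)=\zeta_e^\top\mathcal{P}^{-1}\zeta_e$, whose unit sublevel set is exactly $\mathcal{E}(\mathcal{P},0)$, and abbreviate $\Delta S_k:=S(\xi_k)-S(\xi^{*})$. Evaluating $V(\zeta_{e,k+1})$ along \eqref{eq.data_driven_error_system} produces a quadratic in $\zeta_{e,k}$, a quadratic in $\Delta S_k$, and a cross term; I would absorb the cross term with Lemma~\ref{lemma_1} (a $\mathcal{P}^{-1}$-weighted Young inequality with parameter $\tau>0$), obtaining
\[
\begin{aligned}
V(\zeta_{e,k+1})\le{}&(1+\tau)\,\zeta_{e,k}^\top(\Xi_1\mathcal{G}_{K,l})^\top\mathcal{P}^{-1}(\Xi_1\mathcal{G}_{K,l})\zeta_{e,k}\\
&+(1+\tau^{-1})\,\Delta S_k^\top(\Xi_1\mathcal{G}_{K,nl})^\top\mathcal{P}^{-1}(\Xi_1\mathcal{G}_{K,nl})\Delta S_k .
\end{aligned}
\]
The linear term is controlled by \eqref{eq.DD_CE_conditions_3}: since $\Xi_1\mathcal{G}_{K,l}=(\Xi_1\mathcal{Y})\mathcal{P}^{-1}$, a Schur complement of \eqref{eq.DD_CE_conditions_3} followed by the congruence $\mathcal{P}^{-1}(\cdot)\mathcal{P}^{-1}$ gives $(1+\tau)(\Xi_1\mathcal{G}_{K,l})^\top\mathcal{P}^{-1}(\Xi_1\mathcal{G}_{K,l})\preceq\bigl(\lambda-\varepsilon(1+\tau^{-1})\bigr)\mathcal{P}^{-1}$, so this piece is at most $\bigl(\lambda-\varepsilon(1+\tau^{-1})\bigr)V(\zeta_{e,k})$.

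The residual term is the crux. Schur complements of \eqref{eq.DD_CE_conditions_4} and \eqref{eq.DD_CE_conditions_5} give $(\Xi_1\mathcal{G}_{K,nl})^\top(\Xi_1\mathcal{G}_{K,nl})\preceq\gamma^2 I$ and $\lambda_{\max}(\mathcal{P}^{-1})\le\varepsilon$, hence $(\Xi_1\mathcal{G}_{K,nl})^\top\mathcal{P}^{-1}(\Xi_1\mathcal{G}_{K,nl})\preceq\varepsilon\gamma^2 I$; combining this with the extended Lipschitz bound derived from Assumption~\ref{assumption_2}, $\|\Delta S_k\|^2\le\zeta_{e,k}^\top\mathcal{Q}^\top\mathcal{Q}\,\zeta_{e,k}$, and with the first block of \eqref{eq.DD_CE_conditions_7} (which, after the matching congruence, bounds $\gamma^2\mathcal{Q}^\top\mathcal{Q}$ in terms of $\mathcal{P}^{-1}$) bounds the residual piece by $\varepsilon(1+\tau^{-1})V(\zeta_{e,k})$. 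Adding the two estimates yields $V(\zeta_{e,k+1})\le\lambda V(\zeta_{e,k})$, i.e.\ $\mathcal{E}(\mathcal{P},0)$ is $\lambda$-contractive in the sense of Definition~\ref{def_4}. For constraint satisfaction, the second block of \eqref{eq.DD_CE_conditions_7} is, by a Schur complement, equivalent to $\mathcal{F}_l\mathcal{P}\mathcal{F}_l^\top\le\bar{g}_l^2$ for every $l=1,\dots,q$; by the support-function characterization of ellipsoid-in-halfspace inclusion this is exactly $\mathcal{E}(\mathcal{P},0)\subseteq\{\zeta_e:\mathcal{F}_l\zeta_e\le\bar{g}_l\}$ for all $l$, hence $\mathcal{E}(\mathcal{P},0)\subseteq\mathcal{S}(\mathcal{F},\bar{g})$. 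Together with contractivity and Definition~\ref{definition_7}, $\mathcal{E}(\mathcal{P},0)$ is a safe set, and optimality of the cost makes it the largest one, which solves Problem~\ref{Problem_1}.

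The step I expect to be the main obstacle is the coupled bookkeeping in the two middle paragraphs: the Young parameter $\tau$ must be carried consistently so that the $\sqrt{1+\tau}$ factor and the $\varepsilon(1+\tau^{-1})$ shift hard-wired into \eqref{eq.DD_CE_conditions_3} exactly match the residual bound assembled from \eqref{eq.DD_CE_conditions_4}--\eqref{eq.DD_CE_conditions_7}, and one must track carefully which quantities enter as $\mathcal{P}$ versus $\mathcal{P}^{-1}$ through the two Schur complements and the congruence transformations while the decision variables $\mathcal{P}$ (through $\mathcal{Y}=\mathcal{G}_{K,l}\mathcal{P}$) and $\mathcal{G}_{K,nl}$ remain pinned by the data-consistency equalities \eqref{eq.lemma_3}, \eqref{eq.DD_CE_conditions_1}--\eqref{eq.DD_CE_conditions_2}. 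A secondary obstacle is making the "extended" Lipschitz bound precise—constructing $\mathcal{Q}$ from the matrix $Q$ of Assumption~\ref{assumption_2} and verifying the quadratic bound on $\Delta S_k$ holds along the stacked error variable $\zeta_{e,k}$—and checking that the equilibrium $\zeta_e=0$ is interior to $\mathcal{S}(\mathcal{F},\bar{g})$ so that the inclusion LMIs in \eqref{eq.DD_CE_conditions_7} are meaningful.
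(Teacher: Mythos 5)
Your proposal is correct and follows essentially the same route as the paper's proof: the same Lyapunov function $\zeta_e^\top\mathcal{P}^{-1}\zeta_e$, the same $\tau$-parametrized Young/Lemma~\ref{lemma_1} split of the cross term, the same chain of bounds via $\|\Xi_1\mathcal{G}_{K,nl}\|\le\gamma$, $\lambda_{\max}(\mathcal{P}^{-1})\le\varepsilon$, and the Lipschitz inequality, the same Schur-complement reformulations (with $\mathcal{Y}=\mathcal{G}_{K,l}\mathcal{P}$), and the same support-function argument for $\mathcal{E}(\mathcal{P},0)\subseteq\mathcal{S}$. The only cosmetic difference is that you apply the $(1+\tau)/(1+\tau^{-1})$ weighted Young inequality in one shot rather than bounding the paper's $\Gamma_1$ and $\Gamma_2$ separately; the resulting coefficient $\varepsilon(1+\tau^{-1})$ on the residual term is identical.
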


\begin{proof}
See Appendix~\ref{app:proof_single_ellipsoid}.
\end{proof}

\begin{corollary}
According to \eqref{eq.proof_CE_6}, if the system state starts in the ellipsoid defined by \( \mathcal{P} \), the next state will lie within the scaled ellipsoid \( \lambda' \mathcal{P} \), where
\begin{equation}
\lambda' = \frac{\lambda - \varepsilon(1 + \tau^{-1})}{1 + \tau}.
\end{equation}
\end{corollary}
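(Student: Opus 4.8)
The plan is to read the corollary straight off the per-step Lyapunov estimate \eqref{eq.proof_CE_6} established inside the proof of Theorem~\ref{thm:single_ellipsoid}. By Definition~\ref{def_3}, the hypothesis $\zeta_{e,k}\in\mathcal{E}(\mathcal{P},0)$ is the same as $\zeta_{e,k}^{\top}\mathcal{P}^{-1}\zeta_{e,k}\le 1$; feeding this into \eqref{eq.proof_CE_6}, i.e. into $\zeta_{e,k+1}^{\top}\mathcal{P}^{-1}\zeta_{e,k+1}\le\lambda'\,\zeta_{e,k}^{\top}\mathcal{P}^{-1}\zeta_{e,k}$ with $\lambda'=\frac{\lambda-\varepsilon(1+\tau^{-1})}{1+\tau}$, gives $\zeta_{e,k+1}^{\top}\mathcal{P}^{-1}\zeta_{e,k+1}\le\lambda'$, which is $\zeta_{e,k+1}^{\top}(\lambda'\mathcal{P})^{-1}\zeta_{e,k+1}\le 1$, i.e. $\zeta_{e,k+1}\in\mathcal{E}(\lambda'\mathcal{P},0)$ again by Definition~\ref{def_3}. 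Since $0<\lambda\le 1$ (Definition~\ref{def_4}) and $\tau>0$, one has $\lambda'<\lambda\le 1$, and feasibility of \eqref{eq.DD_CE_conditions_3} with $\mathcal{P}\succ0$ forces $\lambda-\varepsilon(1+\tau^{-1})\ge 0$, so $\lambda'\in[0,1)$ and the successor ellipsoid is a strictly smaller sublevel set of $V(\zeta):=\zeta^{\top}\mathcal{P}^{-1}\zeta$ whenever $\lambda'>0$.

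For completeness, \eqref{eq.proof_CE_6} itself is obtained by evaluating $V$ along the data-based closed-loop error dynamics \eqref{eq.data_driven_error_system}: one writes the successor as a linear part $\Xi_1\mathcal{G}_{K,l}\,\zeta_{e,k}$ plus a residual part $\Xi_1\mathcal{G}_{K,nl}\,(S(\xi_k)-S(\xi^{*}))$ and bounds the cross term with Lemma~\ref{lemma_1}, taking its free scalar as $\tau$ and $\mu=\varepsilon$ (admissible since \eqref{eq.DD_CE_conditions_5} certifies $\varepsilon\ge\lambda_{\max}(\mathcal{P}^{-1})$). Using $\mathcal{G}_{K,l}=\mathcal{Y}\mathcal{P}^{-1}$ from the theorem statement, a Schur complement of \eqref{eq.DD_CE_conditions_3} with respect to the block $\mathcal{P}\succ0$, followed by a congruence by $\mathcal{P}^{-1}$, yields $(1+\tau)(\Xi_1\mathcal{G}_{K,l})^{\top}\mathcal{P}^{-1}(\Xi_1\mathcal{G}_{K,l})\preceq\bigl(\lambda-\varepsilon(1+\tau^{-1})\bigr)\mathcal{P}^{-1}$, which is precisely where the constant $\lambda'$ appears; the residual part is dominated via the extended Lipschitz bound of Assumption~\ref{assumption_2} together with the Schur complements of \eqref{eq.DD_CE_conditions_4} (bounding $\|\Xi_1\mathcal{G}_{K,nl}\|_2\le\gamma$) and of the first inequality in \eqref{eq.DD_CE_conditions_7} (relating $\gamma\mathcal{Q}$ to $\mathcal{P}$, with $\mathcal{Q}$ the extended Lipschitz matrix). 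Assembling these matrix inequalities into a scalar estimate for $V(\zeta_{e,k+1})$ gives \eqref{eq.proof_CE_6}.

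The corollary itself is thus routine; the only real difficulty lies in the bookkeeping behind \eqref{eq.proof_CE_6} --- keeping the $\tau$-weights consistent between the Lemma~\ref{lemma_1} split and the $\sqrt{1+\tau}$ factor already embedded in \eqref{eq.DD_CE_conditions_3}, and, more delicately, deciding whether the nonlinear residual is charged against the $\lambda'$ margin or instead reabsorbed into the $\varepsilon(1+\tau^{-1})\mathcal{P}$ slack reserved in \eqref{eq.DD_CE_conditions_3}, since that choice determines whether the per-step factor reported in \eqref{eq.proof_CE_6} is $\lambda'$ (as the corollary claims) or the looser $\lambda$. One minor point worth stating explicitly: here the scaled ellipsoid $\lambda'\mathcal{P}$ denotes the sublevel set $\mathcal{E}(\lambda'\mathcal{P},0)=\{\zeta:\zeta^{\top}\mathcal{P}^{-1}\zeta\le\lambda'\}$, which is not the Minkowski-scaled set $\lambda'\mathcal{E}(\mathcal{P},0)$ of the Notations.
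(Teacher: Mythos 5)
Your reduction of the corollary to \eqref{eq.proof_CE_6} misreads what that inequality actually asserts. Rearranged, \eqref{eq.proof_CE_6} is
\begin{equation*}
\bigl(\Xi_1\mathcal{G}_{K,l}\,\zeta_{e,k}\bigr)^{\top}\mathcal{P}^{-1}\bigl(\Xi_1\mathcal{G}_{K,l}\,\zeta_{e,k}\bigr)\;\le\;\frac{\lambda-\varepsilon(1+\tau^{-1})}{1+\tau}\,\zeta_{e,k}^{\top}\mathcal{P}^{-1}\zeta_{e,k},
\end{equation*}
so the factor $\lambda'$ governs the \emph{linear component} $\Xi_1\mathcal{G}_{K,l}\zeta_{e,k}$ of the successor, not the full successor $\zeta_{e,k+1}$. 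The inequality your first paragraph feeds the hypothesis into, namely $\zeta_{e,k+1}^{\top}\mathcal{P}^{-1}\zeta_{e,k+1}\le\lambda'\,\zeta_{e,k}^{\top}\mathcal{P}^{-1}\zeta_{e,k}$, is strictly stronger than anything the derivation establishes: the $(1+\tau)$ inflation and the $\varepsilon(1+\tau^{-1})\mathcal{P}^{-1}$ subtraction appearing in \eqref{eq.proof_CE_6} are precisely the slack that is \emph{spent} on the cross term $\Gamma_1$ and the residual term $\Gamma_2$, so the full state is only guaranteed the original $\lambda$-contraction $V(\zeta_{e,k+1})\le\lambda V(\zeta_{e,k})$. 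You explicitly flag this fork in your closing paragraph (whether the residual is charged against the $\lambda'$ margin or reabsorbed into the reserved slack), but you then resolve it the wrong way in the main argument. The reading consistent with the paper --- confirmed by how Corollary 1 is invoked right after \eqref{eq.Schur_proof_0} in the proof of Theorem~\ref{thm:CHE}, where only the linear map $\Xi_1\mathcal{G}_{K,l,j}$ appears --- is the linear-part statement: a matrix inequality of the form $M^{\top}\mathcal{P}^{-1}M\preceq\lambda'\mathcal{P}^{-1}$ sends $\mathcal{E}(\mathcal{P},0)$ into the $\lambda'$-scaled ellipsoid. Under that reading your one-line sublevel-set computation is fine, and your observation that ``$\lambda'\mathcal{P}$'' must mean the sublevel set at level $\lambda'$ rather than the Minkowski scaling of the Notations is a fair and useful clarification; but as written your main chain proves a claim about $\zeta_{e,k+1}$ that does not follow from \eqref{eq.proof_CE_6} and is not implied by the theorem's constraints.

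Your second paragraph, reconstructing how \eqref{eq.proof_CE_6} arises (the Lemma~\ref{lemma_1} split with scalar $\tau$ and $\mu=\varepsilon$, the Lipschitz/$\gamma$ bounds on the residual via \eqref{eq.DD_CE_conditions_4} and \eqref{eq.DD_CE_conditions_7}, and the Schur complement of \eqref{eq.DD_CE_conditions_3} yielding $(1+\tau)(\Xi_1\mathcal{G}_{K,l})^{\top}\mathcal{P}^{-1}(\Xi_1\mathcal{G}_{K,l})\preceq(\lambda-\varepsilon(1+\tau^{-1}))\mathcal{P}^{-1}$), is consistent with the paper's Appendix A and needs no correction.
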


Since \(\lambda' < \lambda\), this contraction mitigates the effect of nonlinearities by providing a margin of safety, reducing the influence of these nonlinearities, and enhancing the robustness and stability of the closed-loop system.

\begin{remark}\label{remark_2}
The invariant ellipsoids obtained from Theorem~1 are centered at the origin in the error coordinate system, which corresponds to the desired steady-state \( \zeta^\ast \). When these ellipsoids are projected onto the 2D position subspace and translated to be centered at the sampled waypoint \( x_{\mathrm{pos}} \), they define the safe region for that waypoint in the position space. Therefore, each 2D-projected and translated ellipsoid represents a safe set within which the position state can evolve, while guaranteeing that the full system state remains inside the corresponding invariant set.
\end{remark}

For each sampled position, an invariant ellipsoid centered at the desired steady state is computed using Theorem~2. However, not all sampled points can be selected as new waypoints. To ensure safety and feasibility, a candidate waypoint is accepted only if a safe transition exists from a previously verified node—specifically, if the corresponding invariant ellipsoids overlap. In the next subsection, we address this critical aspect by developing a method to certify safe transitions between successive invariant ellipsoids, which guarantees system invariance throughout the planned path.

\subsection{Safe Transition Between Invariant Ellipsoids}
For successful motion planning, it is essential to ensure smooth transitions between successive ellipsoids. A major limitation of traditional invariant‑ellipsoid methods is the restrictive assumption that the center of the parent ellipsoid must lie inside the child ellipsoid \cite{niknejad2024soda}. In those schemes the child ellipsoid’s controller is applied immediately after the switch, so its gain must already stabilize the state starting from the parent center; embedding that center in the child ellipsoid guarantees the transition is always feasible. This requirement severely narrows the set of admissible trajectories, often yielding sub‑optimal or even infeasible plans. To remove this drawback, we relax the center‑containment condition and instead permit successive ellipsoids to intersect, certifying safety through an intermediate invariant ellipsoid.

By solving a set of linear matrix inequalities (LMIs), the next ellipsoid is constructed to intersect with its parent, thereby relaxing the conservative requirement of containing the parent ellipsoid's center while still guaranteeing a safe transition. Since the intersection of the ellipsoids is only crucial in the \(x\)-\(y\) plane, the ellipsoids are projected onto this plane before computing the intersection. An intermediate ellipsoid is then identified that not only contains this intersection but also remains within the union of both projected ellipsoids in the \(x\)-\(y\) plane. This guarantees that the system state always evolves within a well-defined safe region.

The key advantage of this formulation is that the center of the computed intermediate ellipsoid lies within both the parent and successor ellipsoids in the \(x\)-\(y\) plane. This property facilitates a smooth transition between control policies. Initially, the parent ellipsoid’s gain is employed to steer the system state toward the intermediate node. Once the state reaches this region, the control switches to the successor ellipsoid’s gain, guiding the trajectory toward the next waypoint.

Given two ellipsoids
\begin{equation}
\mathcal{E}_i = \left\{ x_{pos} \mid (x_{pos} - p_{s,i})^T \mathcal{P}_{proj,i}^{-1} (x_{pos} - p_{s,i}) \leq 1 \right\}, \quad i=1,2,
\end{equation}
with $p_{s,i}=[x_i,y_i]^\top$ denotes the sampled point's position for the $i$th ellipsoid for $i=1,2$, We aim to determine an intermediate ellipsoid
\begin{equation}
\mathcal{E}_o = \left\{ x_{pos} \;\middle|\; (x_{pos} - p_{s,o})^\top \mathcal{P}_o^{-1} (x_{pos} - p_{s,o}) \leq 1 \right\}
\end{equation}
that satisfies the following conditions:
\begin{enumerate}
    \item \( \mathcal{E}_o \) encloses the intersection of \( \mathcal{E}_i \) in the \( x \)-\( y \) plane.
    \item \( \mathcal{E}_o \) is contained within the union of \( \mathcal{E}_i \) in the \( x \)-\( y \) plane.
\end{enumerate}

Following the SDP formulation in \cite{wang2019equivalence}, we solve
\begin{align}\label{eq.intersection_SDP}
& \max_{\varrho_i} \;\;\; \operatorname{logdet} \left( \left(\sum_{i=1}^{2} \varrho_i \mathcal{P}_{proj,i}^{-1} \right)^{-1} \right) \\
& \mathrm{s. t.} \nonumber \\
& \begin{bmatrix}
1 - \sum\limits_{i=1}^{2} \varrho_i + \sum\limits_{i=1}^{2} \varrho_i p_{s,i}^T \mathcal{P}_{proj,i}^{-1} p_{s,i}  & \sum\limits_{i=1}^{2} \varrho_i p_{s,i}^T \mathcal{P}_{proj,i}^{-1}  \\
\sum\limits_{i=1}^{2} \varrho_i \mathcal{P}_{proj,i}^{-1} p_{s,i} & \sum\limits_{i=1}^{2} \varrho_i \mathcal{P}_{proj,i}^{-1}
\end{bmatrix} \succeq 0, \\
& \varrho_i \geq 0, \quad i=1,2.
\end{align}

If the above problem is feasible, then an optimal solution for the intersection ellipsoid exists, and its shape matrix and center in the $x$-$y$ plane are given by
\begin{equation}
\mathcal{P}_{proj,o}^{-1} = \sum\limits_{i=1}^{2} \varrho_i^* \mathcal{P}_{proj,i}^{-1},
\end{equation}
\begin{equation}
p_{s,o} = \mathcal{P}_{proj,o} \sum\limits_{i=1}^{2} \varrho_i^* \mathcal{P}_{proj,i}^{-1} p_{s,i}.
\end{equation}
where $\varrho_i^*$ are the optimal values of $\varrho_i$.

\begin{figure}[h]\label{fig:safe_transition}
    \centering
    \includegraphics[width=0.7\linewidth]{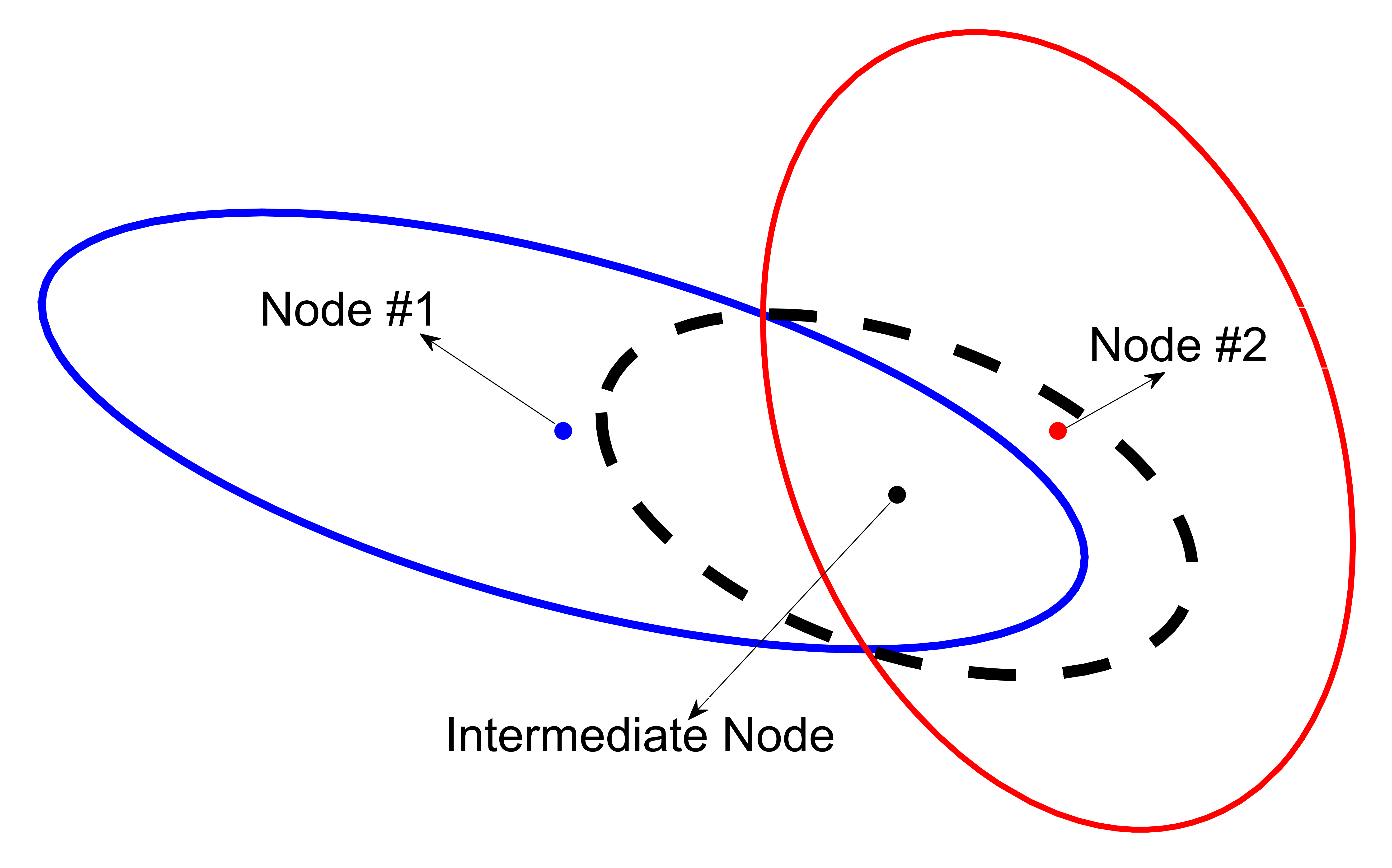}
    \caption{Safe transition between two ellipsoidal regions. The blue ellipsoid represents the parent safe region, while the red ellipsoid represents the child region. The dashed black ellipsoid is the intermediate invariant ellipsoid ensuring a smooth and safe transition. The system state moves from Node \#1 to the Intermediate Node and then follows the next controller to Node \#2.}
    \label{fig:safe_transition}
\end{figure}

Figure~\ref{fig:safe_transition} illustrates this transition mechanism. The blue ellipsoid represents the parent safe set, and the red ellipsoid represents the child region. Instead of enforcing complete containment of the parent ellipsoid's center within the child, an intermediate ellipsoid (dashed black) is computed. This intermediate ellipsoid contains the intersection of the two ellipsoids while remaining within their union.

The state initially moves from the parent node (blue dot) to the intermediate node (black dot) using the control policy associated with the parent ellipsoid. Once the state reaches the intermediate node, the control policy switches to the gain associated with the child ellipsoid, guiding the system toward the child node (red dot). This method eliminates the conservativeness associated with conventional approaches while maintaining safety guarantees. By ensuring that transitions between ellipsoidal regions remain dynamically feasible, the proposed approach extends the applicability of invariant ellipsoids to complex motion planning problems.

Algorithm 2 summarizes the steps for ensuring a safe transition between a parent ellipsoid and its corresponding child ellipsoid. 

\begin{algorithm}\label{alg:safe_transition}
\caption{Safe Transition Between Projected Invariant Ellipsoids}
\label{alg:ellipsoid_safe_transition}
\textbf{Input:} Two invariant ellipsoids $\mathcal{E}_1$ and $\mathcal{E}_2$.\\
\textbf{Output:} Intermediate ellipsoid $\mathcal{E}_o$ ensuring safe transition in the $x$-$y$ plane.

\begin{algorithmic}[1]
\State \textbf{Project} the shape matrices $\mathcal{P}_1$ and $\mathcal{P}_2$ onto the $x$-$y$ plane to obtain $\mathcal{P}_{proj,1}$ and $\mathcal{P}_{proj,2}$.
\State \textbf{Formulate} the semidefinite program (SDP) \eqref{eq.intersection_SDP} to find an intermediate ellipsoid $\mathcal{E}_o$ in the $x$-$y$ plane such that:
\begin{itemize}
    \item $\mathcal{E}_o$ encloses the intersection of $\mathcal{E}_{proj,1}$ and $\mathcal{E}_{proj,2}$.
    \item $\mathcal{E}_o$ is contained within the union of $\mathcal{E}_{proj,1}$ and $\mathcal{E}_{proj,2}$.
\end{itemize}
\State \textbf{Solve} the SDP to determine the shape matrix $\mathcal{P}_{proj,o}$ and center $p_{s,o}$ of the intermediate ellipsoid in the $x$-$y$ plane.
\State \textbf{Use} the parent ellipsoid’s gain to steer the system state toward the center of $\mathcal{E}_o$.
\State \textbf{Switch} to the successor ellipsoid’s gain to guide the state from $\mathcal{E}_o$ to the next node.
\State \textbf{Return} the intermediate ellipsoid $\mathcal{E}_o$ defined by $(\mathcal{P}_{proj,o}, p_{s,o})$.
\end{algorithmic}
\end{algorithm}

Before implementing the control policies described in Theorem 2, it is crucial to define how the planned path will be executed while ensuring system safety.

\subsection{Execution of Safe Motion Planning} 
Once the path is planned, the next step is to safely execute the trajectory while ensuring that the system state remains within the sequence of invariant ellipsoids. In this framework, each node in the path corresponds to an invariant ellipsoid with a precomputed control gain that guarantees that the system state converges to the center of the target ellipsoid without leaving its boundaries.

During execution, the system uses the parent ellipsoid's control gain to reach the child ellipsoid, after which it switches to the child's gain for continued motion.

The steps for this process are summarized in Algorithm~\ref{alg:execution}, which ensures that the trajectory is executed safely and the system transitions smoothly between ellipsoidal regions.



\begin{algorithm}
\caption{Execution of Safe Motion Plan via Invariant Ellipsoids}
\label{alg:execution}
\textbf{Input:} Precomputed waypoint sequence \( \{p_1, p_2, \dots, p_{N_w}\} \) with corresponding control gains \( \{\mathcal{K}_1, \mathcal{K}_2, \dots, \mathcal{K}_n\} \), system dynamics \( f(x, u) \), initial position \( p_{init} = p_1 = x_{pos,0} \).\\
\textbf{Output:} Safe trajectory execution from \( p_{init} \) to \( p_{goal} = p_{N_w} \).

\begin{algorithmic}[1]
\State \textbf{Initialize} current waypoint index \( i \gets 1 \), current position \( p_i \gets p_{init} \)
\While{$p_{curr} \neq p_{N_w}$}
    \State \textbf{Retrieve} the control gain \( \mathcal{K}_i \) associated with waypoint \( p_i \)
    \State \textbf{Compute} virtual control \( v_k = \mathcal{K}_i \zeta_k \), and update the actual control input \( u_k \) as the integral of \( v_k \).
    \State \textbf{Update} state using system dynamics: \( x_{k+1} = f(x_k, u_k) \)
    \State \textbf{Update} position \( p_{curr} \gets \text{position component of } x_{k+1} \)
    \If{$\|p_{curr} - p_i\| < r_f$ \textbf{and} \( i < N_w \)}
        \State \textbf{Advance} to the next waypoint: \( i \gets i + 1 \)
    \EndIf
\EndWhile
\State \textbf{Return} executed trajectory
\end{algorithmic}
\end{algorithm}

Here, \( p_{curr} \) denotes the position component of the full system state \( x_k \), which is used to determine the current location of the system within the sequence of ellipsoids and to track progress toward the next waypoint.

\subsection{Summary of the Data-Driven Safe Motion Planning Method}
The proposed data-driven safe motion planning framework leverages invariant ellipsoids to ensure system safety and feasibility at each step of the planning process. This method focuses on constructing a safe trajectory from the initial waypoint $p_{init}$ to the goal waypoint $p_{goal}$, while satisfying safety guarantees. The primary steps of the framework are as follows:

\begin{itemize}
    \item \textbf{Initialization:} The algorithm begins by initializing the graph $G$ with the initial waypoint $p_{init}$. The corresponding invariant ellipsoid $\mathcal{E}_{init}$ is computed using data-driven LMIs to ensure that the initial waypoint lies within a safe region.
    
    \item \textbf{Sampling and Admissible Set Identification:} At each iteration, a new point \( p_s \) is sampled with a specified probability of selecting a point near the goal. The admissible set around the sampled point is identified.

    \item \textbf{Ellipsoid and Control Gain Computation:} For each sampled point, a set of LMIs is solved to determine an invariant ellipsoid \( \mathcal{E}_{s} \) that guarantees safety. Simultaneously, a nonlinear state-feedback control gain \( \mathcal{K}_{s} \) is computed to regulate the system within the ellipsoidal region.

    \item \textbf{Intersection Check and Graph Update:} The algorithm checks for an intersection between the invariant ellipsoids \( \mathcal{E}_{\text{nearest}} \) and \( \mathcal{E}_{s} \) by solving an SDP. If an intersection exists, an intermediate point is computed within the intersection region to facilitate a smooth transition. Both the sampled node \( p_{s} \) and the intermediate node are then added to the graph \( G \). Otherwise, if no intersection is found, the sampled point is rejected, and a new point is sampled. This process is repeated iteratively to construct a feasible trajectory from \( p_{init} \) to \( p_{goal} \), ensuring smooth and safe motion while maintaining invariance within the ellipsoidal regions.

    \item \textbf{Trajectory Execution with Control Gains:} Once a valid path is constructed, the planned trajectory is executed using the system dynamics and the control gains associated with each invariant ellipsoid along the path. The system first applies the control gain \( \mathcal{K}_{\text{nearest}} \) associated with the nearest ellipsoid \( \mathcal{E}_{\text{nearest}} \) to steer toward the intermediate point. Upon reaching this region, the control switches to the gain \( \mathcal{K}_{s} \) associated with the sampled ellipsoid \( \mathcal{E}_{s} \), guiding the system toward its center. This process is repeated iteratively, following the planned sequence of ellipsoids, until the system reaches the goal point, ensuring smooth transitions and stability while respecting the safety constraints imposed by the invariant ellipsoids.

\end{itemize}

The detailed steps of the method are summarized in Algorithm~\ref{alg:IE-RRT}.
\begin{algorithm}
\caption{Data-Driven Safe Motion Planning Using Invariant Ellipsoids}
\label{alg:IE-RRT}
\begin{algorithmic}[1]
\Require $p_{init}$, $p_{goal}$, collected data $V_0$, $\Xi_1$, and $\mathcal{Z}_0$, contraction factor $\lambda$, maximum iterations $N_{\text{max}}$, probability $p_r$ of goal sampling.
\Ensure A safe dynamically feasible path from $p_{init}$ to $p_{goal}$ using invariant ellipsoids.

\State \textbf{Initialize} graph $G = (V, E)$ with $V = \{p_{init}\}$, $E = \emptyset$.
\State \textbf{Compute} the initial invariant ellipsoid \( \mathcal{E}_{init} \) and its corresponding nonlinear state-feedback gain around \( p_{init} \) using the SDP formulation in \eqref{eq.SDP_single}.
\While{$\|p_s - p_{goal}\| \geq r_f$ \textbf{and} iteration $\leq N_{\text{max}}$}
    \State \textbf{Sample} a new point $p_{s}$ using probability $p_r$.
    \State \textbf{Identify} the admissible set around $p_{s}$ and compute the corresponding invariant ellipsoid $\mathcal{E}_{s}$ and its corresponding nonlinear state-feedback gain.
    \State \textbf{Check} for safe transition by verifying the intersection between $\mathcal{E}_{nearest}$ and $\mathcal{E}_{s}$.
    
    \If{an intersection exists}
        \State \textbf{Compute} an intermediate invariant ellipsoid $\mathcal{E}_{\text{int}}$ that contains the intersection using SDP \eqref{eq.intersection_SDP}.
        \State \textbf{Assign} the control gain \( \mathcal{K}_{\text{nearest}} \) to steer the system toward the intermediate node. Upon reaching this region, switch to the control gain \( \mathcal{K}_{s} \) to guide the system toward the center of \( \mathcal{E}_{s} \).
        \State \textbf{Add} $p_{s}$ and the intermediate node to $G$.
    \EndIf
\EndWhile
\State \textbf{Execute} the planned trajectory using system dynamics and associated control gains for each ellipsoid.
\State \Return executed trajectory
\end{algorithmic}
\end{algorithm}

\section{Data-Driven Motion Planning Using Convex Hull of Ellipsoids}
In this section, we extend the data-driven motion planning framework by leveraging convex hulls of ellipsoids instead of individual invariant ellipsoids. The key advantage of using convex hulls is their ability to represent larger and more complex safe regions, which provides greater flexibility in motion planning. Unlike single ellipsoids, convex hulls offer improved coverage of admissible sets and allow for smoother transitions between regions without sacrificing safety guarantees. By constructing the convex hull of multiple ellipsoids, the proposed method ensures that the state remains within a larger, well-defined safe region at each step. This approach enables the system to explore more efficient trajectories. 

The steps of the method—checking safe transitions, and executing the planned trajectory—are adapted to accommodate the convex hull representation. The process of identifying admissible sets remains the same as for individual ellipsoids, focusing on partitioning the non-convex set into convex subsets for subsequent computations.

\subsection{Dynamics-Aware Safety Guarantees Using the Convex Hull of Ellipsoids}
\begin{problem}[Data-Driven Safe Control Using the Convex Hull of Ellipsoids]\label{Problem_3} 
Consider the nonlinear system \eqref{eq.final_error_system} under Assumptions 1--5, along with the admissible polyhedral set \( \mathcal{S} \). The objective is to design a partitioning scheme \( \mathcal{C}_1,\ldots,\mathcal{C}_{n_p} \) and a piecewise-affine state-feedback controller of the form

\begin{equation}\label{eq.general_controller}
v_k=
\left\{\begin{array}{cc}
\mathcal{K}_1^p \mathcal{Z}(\xi_k,\eta_k), \quad \text{if} \quad \zeta_{e,k} \in \mathcal{C}_1 \\
\vdots \\
\mathcal{K}_{n_p}^p \mathcal{Z}(\xi_k,\eta_k), \quad \text{if} \quad \zeta_{e,k} \in \mathcal{C}_{n_p}
\end{array}\right.
\end{equation}

such that the controlled invariant set \( \mathcal{S}_c = \bigcup_{i=1}^{n_p} \mathcal{C}_i \) is maximized while ensuring \( \mathcal{S}_c \subseteq \mathcal{S} \) and remains invariant for the closed-loop system. Here, \( n_p \) denotes the number of partitions formed within the convex hull of ellipsoids.
\end{problem}

\begin{theorem}[Safe Control Design Using the Convex Hull of Ellipsoids]\label{thm:CHE} 
Consider system \eqref{eq.final_error_system} that satisfies Assumptions 1--5. Data are collected and arranged as equations \eqref{eq.data_V0}--\eqref{eq.data_Z0}. Let there exist matrices $\mathcal{P}_i \in \mathbb{S}^{(n+m+2)}$ and $\mathcal{Y}_i \succeq 0$, and positive scalars $\vartheta_i$ for $i=1,\ldots,n_e$. Consider any feasible solution of the following optimization problem
\begin{align}\label{eq.SDP_CHE}
& \min_{\mathcal{P}_i, \mathcal{Y}_i, \gamma_i, \vartheta_i, \mathcal{G}_{K,nl,i}} \quad \sum_{i=1,\ldots,n_e} \gamma_i - \vartheta_i \\
& \mathrm{s. t.} \nonumber \\
& \mathcal{Z}_0 \mathcal{Y}_i = \begin{bmatrix} \mathcal{P}_i \\ 0_{(n_\mathcal{z}-(n+m+2)) \times (n+m+2)} \end{bmatrix}, \label{eq.DD_CE_conditions_1_CHE} \\
& \mathcal{Z}_0 \mathcal{G}_{K,nl,i} = \begin{bmatrix} 0_{(n+m+2) \times (n_\mathcal{z}-(n+m+2))} \\ I_{(n_\mathcal{z}-(n+m+2))} \end{bmatrix}, \label{eq.DD_CE_conditions_2_CHE} \\
& \begin{bmatrix}
\mathcal{P}_i & \Xi_1 \mathcal{Y}_j \\
(*) & \lambda_j' \mathcal{P}_j
\end{bmatrix} \succeq 0, \label{eq.DD_CE_conditions_3_CHE} \\ 
& \begin{bmatrix}
\gamma_i I_{(n+m+2)} & \Xi_1 \mathcal{G}_{K,nl,i} \\
(*) & \gamma_i I_{(n_\mathcal{z}-(n+m+2))}
\end{bmatrix} \succeq 0, \label{eq.DD_CE_conditions_4_CHE} \\ 
& \begin{bmatrix}
\varepsilon_i I_{(n+m+2)} & I_{(n+m+2)} \\
(*) & \mathcal{P}_i
\end{bmatrix} \succeq 0, \,\,\,
\begin{bmatrix}
I_{(n+m+2)} & \gamma_i \mathcal{Q} \\
(*) & \mathcal{P}_i
\end{bmatrix} \succeq 0, \label{eq.DD_CE_conditions_6_CHE} \\
& \begin{bmatrix}\label{eq.DD_CE_conditions_7_CHE}
\mathcal{P}_i & \mathcal{P}_i\mathcal{F}_l^T \\
(*) & \bar{g}_l^2
\end{bmatrix} \succeq 0, \,\,\, \forall l=1,\ldots,q, \\
& \begin{bmatrix}\label{eq.DD_CE_conditions_8_CHE}
1 & \vartheta_i d_i^T \\
(*) & \mathcal{P}_i
\end{bmatrix} \succeq 0, \,\,\,
\mathrm{for} \; i=1,\ldots,n_e,
\end{align}
where $\lambda_j' = \frac{\lambda - \varepsilon_j(1 + \tau_j^{-1})}{1 + \tau_j}$. Then Problem \ref{Problem_3} is solved, and the largest invariant subset of the admissible set~$\mathcal{S}$ that can be represented as the convex hull of the ellipsoids computed above for the closed‑loop system~\eqref{eq.final_error_system} is given by \( \mathcal{S}_c = \operatorname{Co} \big( \mathcal{E}(\mathcal{P}_1,0),\ldots,\mathcal{E}(\mathcal{P}_{n},0) \big) \). The corresponding controller gains are computed as \( \mathcal{K}_i = [\mathcal{K}_{l,i} \quad \mathcal{K}_{nl,i}] = V_0 [\mathcal{G}_{K,l,i} \quad \mathcal{G}_{K,nl}] \), where \( \mathcal{G}_{K,l,i} = \mathcal{Y}_i \mathcal{P}_i^{-1} \). Additionally, the index \( j \) is defined as \( j = \operatorname{R}_{n_e}(i) = \operatorname{mod}(i + n_e - 2, n_e) + 1 \) for \( i = 1, \ldots, n_e \), where \( n_e \) denotes the number of ellipsoids.
\end{theorem}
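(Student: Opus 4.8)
The plan is to reduce Theorem~\ref{thm:CHE} to a collection of single-ellipsoid arguments of the type already established in Theorem~\ref{thm:single_ellipsoid}, together with a convexity argument that lifts $\lambda'$-contractiveness of each ellipsoid $\mathcal{E}(\mathcal{P}_i,0)$ to invariance of the convex hull $\mathcal{S}_c = \operatorname{Co}(\mathcal{E}(\mathcal{P}_1,0),\ldots,\mathcal{E}(\mathcal{P}_{n_e},0))$ under the piecewise-affine controller \eqref{eq.general_controller}. First I would fix an index $i$ and read off, exactly as in the proof of Theorem~\ref{thm:single_ellipsoid}, the consequences of \eqref{eq.DD_CE_conditions_1_CHE}, \eqref{eq.DD_CE_conditions_2_CHE}, \eqref{eq.DD_CE_conditions_4_CHE}, the first inequality of \eqref{eq.DD_CE_conditions_6_CHE}, and \eqref{eq.DD_CE_conditions_7_CHE}: these encode, respectively, the data-consistency parametrization $\mathcal{G}_{K,l,i}=\mathcal{Y}_i\mathcal{P}_i^{-1}$, the isolation of the nonlinear residual channel, the bound $\gamma_i$ on $\|\Xi_1\mathcal{G}_{K,nl,i}\|$, the relation $\varepsilon_i\ge\lambda_{\max}(\mathcal{P}_i^{-1})$, and the containment $\mathcal{E}(\mathcal{P}_i,0)\subseteq\mathcal{S}(\mathcal{F},\bar g)$ via the $q$ half-space conditions. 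The second inequality of \eqref{eq.DD_CE_conditions_6_CHE} together with Assumption~1 (in its extended form $\mathcal{Q}$) then gives the Lipschitz bound $(S(\xi_k)-S(\xi^*))^\top(S(\xi_k)-S(\xi^*))\le \gamma_i^{-2}\,\mathcal{Z}^\top\cdots$ needed to absorb the residual term in \eqref{eq.data_driven_error_system}; this is the step where I reuse verbatim the $\varepsilon$-Young-type splitting of Lemma~\ref{lemma_1} applied to $M=(\Xi_1\mathcal{G}_{K,nl,i})(S(\xi_k)-S(\xi^*))$ and $N=\Xi_1\mathcal{Y}_i\mathcal{P}_i^{-1}\zeta_{e,k}$.

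The crucial new ingredient is the cross-index inequality \eqref{eq.DD_CE_conditions_3_CHE}, which couples $\mathcal{P}_i$ (the ``source'' ellipsoid) with $\lambda_j'\mathcal{P}_j$ where $j=\operatorname{R}_{n_e}(i)$. Taking the Schur complement of \eqref{eq.DD_CE_conditions_3_CHE} with respect to the $(2,2)$ block $\lambda_j'\mathcal{P}_j$ and substituting $\mathcal{Y}_i=\mathcal{G}_{K,l,i}\mathcal{P}_i$ yields
\begin{equation}\label{eq.CHE_cross}
(\mathcal{A}+\mathcal{B}\mathcal{K}_i)\mathcal{P}_i(\mathcal{A}+\mathcal{B}\mathcal{K}_i)^\top \preceq \lambda_j' \mathcal{P}_j,
\end{equation}
i.e. the closed-loop linear part steered by gain $\mathcal{K}_i$ maps the ellipsoid $\mathcal{E}(\mathcal{P}_i,0)$ into $\lambda_j'\mathcal{E}(\mathcal{P}_j,0)$; folding in the residual bound from the previous paragraph turns this into $\zeta_{e,k}\in\mathcal{E}(\mathcal{P}_i,0)\Rightarrow\zeta_{e,k+1}\in\mathcal{E}(\mathcal{P}_j,0)$ (the contraction slack $\lambda_j'<\lambda$ being exactly what pays for the nonlinear perturbation, as in Corollary~1). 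Because the rotational map $\operatorname{R}_{n_e}$ is a cyclic permutation of $\{1,\ldots,n_e\}$, every ellipsoid is the image of some other ellipsoid, so the union $\bigcup_i\mathcal{E}(\mathcal{P}_i,0)$ is mapped into itself; I would then invoke the standard fact (used in the partitioning literature cited, e.g.\ \cite{nguyen2023,nguyen2024}, and underlying the convex-combination identity \eqref{eq.combination}) that if a PWA controller maps each vertex ellipsoid of a convex hull into the hull, then by convexity of $f_{\mathrm{cl}}$ along the interpolated gains the whole hull $\mathcal{S}_c$ is invariant — any $\zeta_{e,k}=\sum_i\alpha_i\zeta^{(i)}$ with $\zeta^{(i)}\in\mathcal{E}(\mathcal{P}_i,0)$ has successor $\sum_i\alpha_i\zeta^{(i)}_{+}$ with $\zeta^{(i)}_{+}\in\mathcal{E}(\mathcal{P}_{j(i)},0)\subseteq\mathcal{S}_c$, hence again in $\mathcal{S}_c$. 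Combined with $\mathcal{E}(\mathcal{P}_i,0)\subseteq\mathcal{S}$ for every $i$ and convexity of $\mathcal{S}$, this gives $\mathcal{S}_c\subseteq\mathcal{S}$ and settles invariance.

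The remaining conditions handle optimality and the geometry of the partition. The objective $\sum_i(\gamma_i-\vartheta_i)$ simultaneously minimizes the residual bounds $\gamma_i$ and maximizes, through \eqref{eq.DD_CE_conditions_8_CHE}, the quantities $\vartheta_i$; I would interpret \eqref{eq.DD_CE_conditions_8_CHE} via its Schur complement as $\vartheta_i^2 d_i^\top\mathcal{P}_i d_i\le 1$, i.e.\ the direction $d_i$ (pointing toward a chosen spread vector, e.g.\ the displacement between adjacent sampled waypoints) is contained in $\mathcal{E}(\mathcal{P}_i,0)$ scaled by $1/\vartheta_i$, so maximizing $\vartheta_i$ stretches ellipsoid $i$ along $d_i$ and forces genuine overlap between consecutive ellipsoids — this is what makes the hull $\mathcal{S}_c$ a non-degenerate enlargement of any single $\mathcal{E}(\mathcal{P}_i,0)$, establishing the ``largest invariant subset representable as such a convex hull'' claim in the sense of the relaxed SDP. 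Finally the gain formula $\mathcal{K}_i=V_0[\mathcal{G}_{K,l,i}\ \mathcal{G}_{K,nl}]$ with $\mathcal{G}_{K,l,i}=\mathcal{Y}_i\mathcal{P}_i^{-1}$ follows by right-multiplying \eqref{eq.closed_loop} by the corresponding $\mathcal{G}_{K,i}$ exactly as in \eqref{eq.closed_loop_data}, and $j=\operatorname{R}_{n_e}(i)$ is just Definition~\ref{def_0}. I expect the main obstacle to be the rigorous statement and proof of the convex-hull invariance step: one must be careful that the piecewise-affine controller's value on a point of the hull that lies in several partitions $\mathcal{C}_i$ is consistent with the convex-combination bound (so that the interpolation of gains, not merely a fixed gain, keeps each component in its target ellipsoid), and that the residual nonlinearity $S(\xi_k)-S(\xi^*)$ evaluated at the true state — not at the fictitious convex components — is still dominated termwise; handling this will require either a common-$\varepsilon$/common-$\tau$ choice across indices or an explicit argument that the worst-case residual over the hull is controlled by the per-ellipsoid $\gamma_i$ bounds.
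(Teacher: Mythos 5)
Your overall strategy matches the paper's: decompose a point of the hull as a convex combination $\zeta_{e,k}=\sum_i\alpha_{i}\upsilon_{i}$ with $\upsilon_i\in\mathcal{E}(\mathcal{P}_i,0)$, use the per-index conditions exactly as in Theorem~\ref{thm:single_ellipsoid} to absorb the nonlinear residual into the slack $\lambda'<\lambda$, handle polytope containment and the $\vartheta_i$-direction maximization by Schur complements of \eqref{eq.DD_CE_conditions_7_CHE} and \eqref{eq.DD_CE_conditions_8_CHE}, and conclude hull invariance by convexity. Where you genuinely diverge is the key containment step. The paper pre/post-multiplies \eqref{eq.DD_CE_conditions_3_CHE} by $\operatorname{diag}(I,\mathcal{P}_j^{-1})$, sums the resulting LMIs with the weights $\alpha_i$ to place $\sum_i\alpha_i\mathcal{P}_i$ in the $(1,1)$ block, deduces that each component lands in $\lambda_j'\mathcal{E}(\sum_i\alpha_i\mathcal{P}_i,0)$, and then proves $\mathcal{E}(\sum_i\alpha_i\mathcal{P}_i,0)\subseteq\mathcal{S}_c$ by a supporting-hyperplane contradiction. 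You instead read \eqref{eq.DD_CE_conditions_3_CHE} as a pairwise statement that each ellipsoid's image lies in a scaled copy of its cyclic partner, which sits in $\mathcal{S}_c$ trivially; this avoids the supporting-hyperplane lemma entirely and is, if anything, more faithful to the theorem as stated, since the paper's summation over all $i$ for a fixed $j$ implicitly requires the LMI for every pair $(i,j)$ while the constraint set only imposes it for the cyclic pairs $j=\operatorname{R}_{n_e}(i)$. Your route is the more elementary of the two and buys a cleaner reduction to "union maps into union"; the paper's route buys the sharper intermediate fact that the image lies in a single averaged ellipsoid inside the hull.

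Two caveats. First, you have the indexing of \eqref{eq.DD_CE_conditions_3_CHE} reversed: the off-diagonal block is $\Xi_1\mathcal{Y}_j$, so the Schur complement with respect to the $(1,1)$ block $\mathcal{P}_i$ gives $(\Xi_1\mathcal{G}_{K,l,j})^\top\mathcal{P}_i^{-1}(\Xi_1\mathcal{G}_{K,l,j})\preceq\lambda_j'\mathcal{P}_j^{-1}$, i.e.\ it is the gain $\mathcal{K}_j$ that maps $\mathcal{E}(\mathcal{P}_j,0)$ into a scaled $\mathcal{E}(\mathcal{P}_i,0)$, not $\mathcal{K}_i$ mapping $\mathcal{E}(\mathcal{P}_i,0)$ into $\mathcal{E}(\mathcal{P}_j,0)$ as you wrote in \eqref{eq.CHE_cross}. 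Because $\operatorname{R}_{n_e}$ is a bijection the set-level conclusion survives relabeling, but the gain-to-ellipsoid assignment must be corrected for consistency with the interpolation \eqref{eq.K_interpolation}. Second, the interpolation-consistency issue you flag at the end (that the PWA controller evaluated at the true state reproduces the convex combination of per-component updates, and that the residual is evaluated at the true state rather than at the fictitious components) is indeed left informal — but it is equally informal in the paper's own proof, so this is a shared, not a new, gap.
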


\begin{proof}
See Appendix~\ref{proof_thm_3}.
\end{proof}

\begin{remark}
Although Theorem~\ref{thm:CHE} provides the procedure for computing the individual ellipsoids \(\mathcal{E}(\mathcal{P}_i,0)\), their feedback gains \(\mathcal{K}_i\), and proves that the resulting convex hull is \(\lambda\)-contractive, the systematic partitioning of the admissible set, the assignment of a gain to each partition, and the construction of the overall control law from the collection of gains \(\{\mathcal{K}_i\}\) are developed in detail later and summarized in Algorithms 5 and 6.
\end{remark}

\begin{remark}\label{remark_3}
In general, all possible non-adjacent vertex pairs can be used to define reference directions for constructing ellipsoids. These directions correspond to diagonals of the polytope formed by system constraints and offer rich geometric coverage. However, as the number of system states and vertices increases, the number of such directions grows quadratically, significantly increasing computational complexity. To balance computational efficiency with coverage, we consider only a subset of directions aligned with the principal axes. As a result, for a system with \( n \) states, we construct \(n_e = n \) ellipsoids—each aligned with one of the principal directions—thus offering a scalable and tractable solution for high-dimensional systems.
\end{remark}

\subsubsection{Partitioning and State-Feedback Control Computation}
This subsection describes the process of partitioning the convex hull of ellipsoids and computing state-feedback controllers for each partition. The partitioning extends the approach in \cite{nguyen2024} to higher-dimensional systems using an algorithmic approach.

\begin{definition}
A point $v^*$ on the boundary of the convex set $\mathcal{S}$, denoted as $\operatorname{Fr}(\mathcal{S})$, is called an extreme point if it cannot be expressed as a convex combination of any other points in $\mathcal{S}$.
\end{definition}

The vertices of the convex hull are obtained by solving:
\begin{equation}\label{eq.vertices}
v_c^T \mathcal{P}_i v_c = 1, \quad i=1,\dots,n_e.
\end{equation}

Not all solutions correspond to true vertices of the convex hull; only those forming the outer boundary are retained. The partitioning method is summarized in Algorithm \ref{alg.set}.

\begin{algorithm}
\caption{Set Partitioning Algorithm}\label{alg.set}
\begin{algorithmic}[1]
\Require $\mathcal{P}_i$: Ellipsoid shape matrices; $2n+m$: Number of augmented states.
\Ensure $v_e^*$: Vertices forming the convex hull.

\State \textbf{for} each pair of ellipsoids $(\mathcal{E}(\mathcal{P}_i,0), \mathcal{E}(\mathcal{P}_j,0))$ \textbf{do} 
\State \quad Solve:
\begin{align}
\phi^T \mathcal{P}_i \phi &= 1 \nonumber \\
\phi^T \mathcal{P}_j \phi &= 1 \nonumber
\end{align}
\State \quad Compute candidate vertices:
\begin{align}
v_{e,i} &= \mathcal{P}_i \phi \nonumber \\
v_{e,j} &= \mathcal{P}_j \phi \nonumber
\end{align}
\State \quad Store all candidate vertices:
\begin{equation}
v_{\text{all}} = [v_{\text{all}}, v_{e,i}, v_{e,j}] \nonumber
\end{equation} 
\State \textbf{end for} 

\State Apply the Quickhull algorithm~\cite{barber1996quickhull} to compute \( v_e^* \) and their corresponding convex hull.
\State \textbf{Partitioning:} Each region is formed by selecting local neighborhoods of adjacent extreme points to define a polyhedral partition.
\end{algorithmic}
\end{algorithm}

\subsubsection{Numerical Example: Partitioning the Convex Hull of Ellipsoids}
To illustrate the proposed approach, we consider a nonlinear system described by the following state-space representation
\begin{equation}
    x_{k+1} = A x_k + B u_k + f(x_k),
\end{equation}
where
\[
A = \begin{bmatrix}
0.7 & 0.1 & 0.05 \\ 
0 & 0.8 & 0.1 \\ 
0.1 & 0 & 0.6 
\end{bmatrix},
\quad
B = \begin{bmatrix}
0.5 & 0.1 \\ 
0.1 & 0.5 \\ 
0.2 & 0.2
\end{bmatrix}.
\]
and
\begin{equation}
    f(x_k) = \begin{bmatrix}
    \sin(x_{1,k}) \\ 
    x_{2,k}^2 - 0.5 \\ 
    \exp(-x_{3,k}) - 1
    \end{bmatrix}.
\end{equation}

The admissible set for all three states is defined as
\[
x_1, x_2, x_3 \in [-1, 1].
\]
This defines a bounded polyhedral region in the state space.

To align the ellipsoids with the geometry of the admissible region, we choose three principal directions and we construct three ellipsoids aligned with these directions. 

Using Algorithm~\ref{alg.set}, the convex hull of these ellipsoids is partitioned into regions that provide an efficient approximation of the admissible set.

Figure~\ref{fig:convex_hull} shows the convex hull formed by the three ellipsoids, while Figure~\ref{fig:polyhedral_approximation} shows the polyhedral under-approximation of the convex hull.

\begin{figure}[h]
    \centering
    \includegraphics[width=1\linewidth]{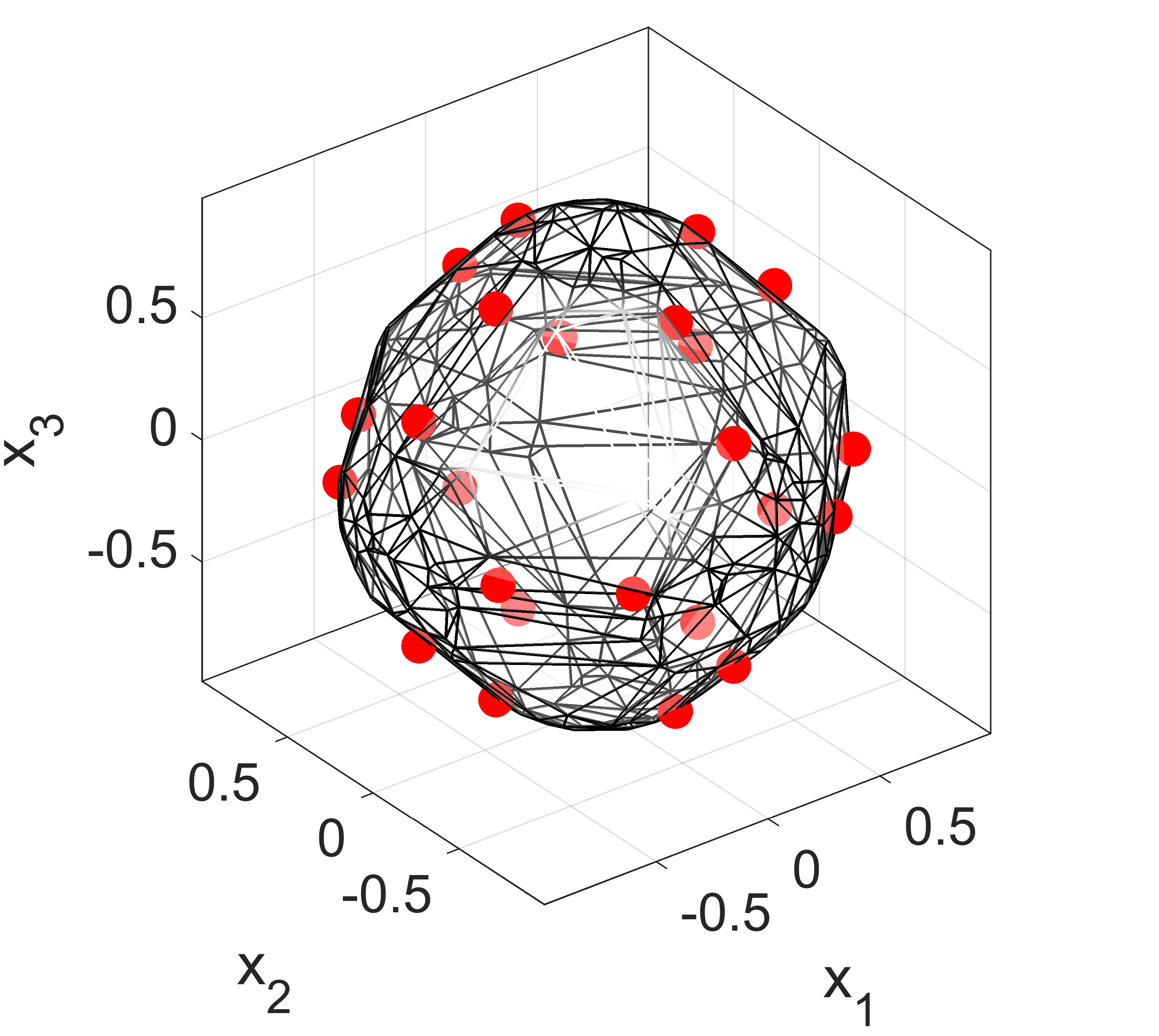}
    \caption{Convex hull of the three ellipsoids: The convex hull encloses the admissible region, with ellipsoids aligned to the principal directions.}
    \label{fig:convex_hull}
\end{figure}

\begin{figure}[h]
    \centering
    \includegraphics[width=1\linewidth]{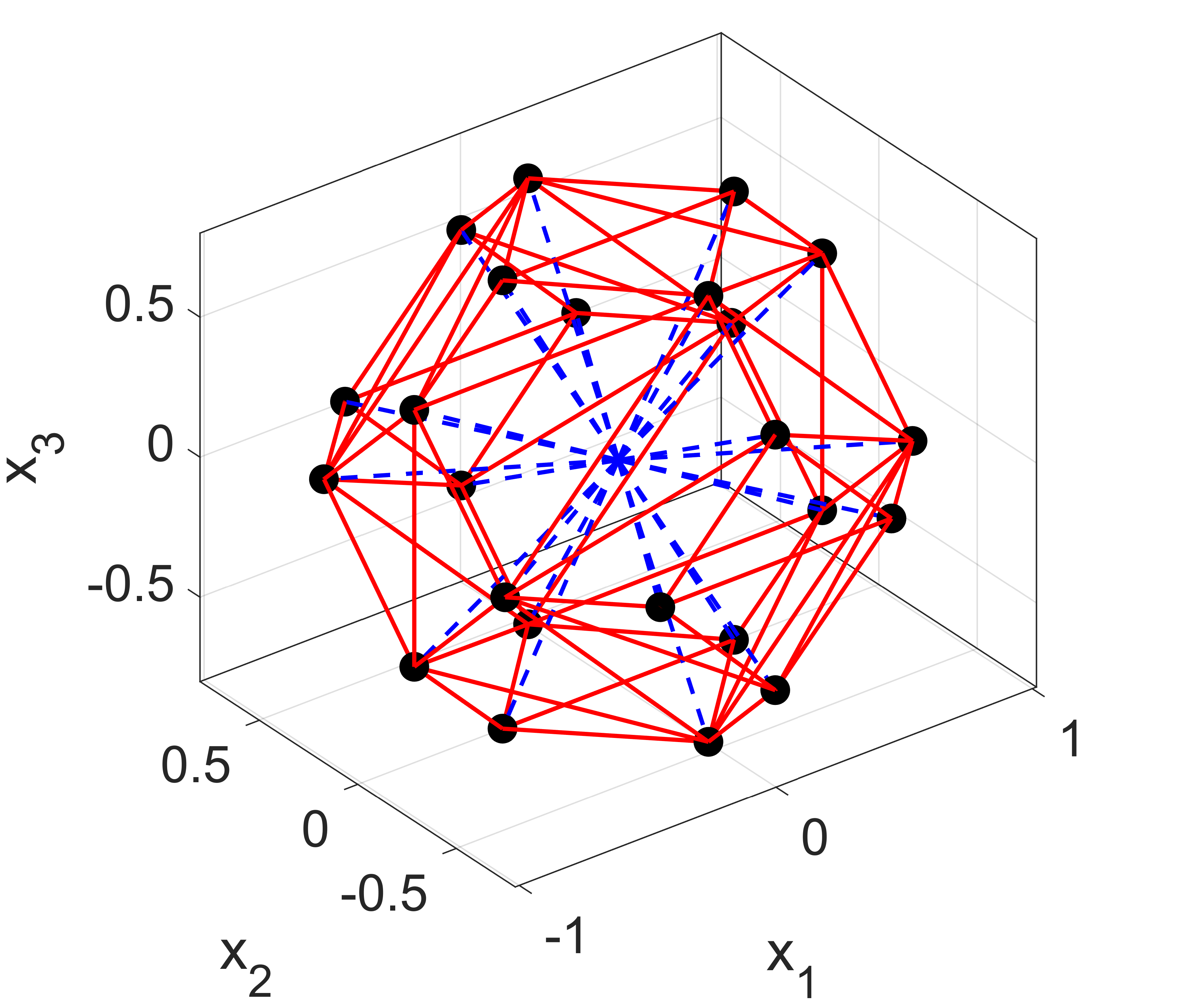}
    \caption{Polyhedral under-approximation of the convex hull.}
    \label{fig:polyhedral_approximation}
\end{figure}

The polyhedral under-approximation offers a conservative yet computationally efficient representation, which can be further leveraged for local nonlinear state-feedback control design and efficient intersection checking of convex hulls, as discussed later.

\subsubsection{Computation of State-Feedback Gains}
Once the convex hull is partitioned, the next step is to compute the state-feedback control gains for each partition. To compute the control gain for the partition, we first determine which ellipsoid each vertex belongs to. Then, we construct the gain interpolation using a simplex-based approach

\begin{equation}\label{eq.K_interpolation}
\mathcal{K}_{i} = \mathcal{K}_{\text{matrix}} V_{\text{matrix}}^{-1}, \quad \text{for} \;\;\; i = 1,\ldots,n_p
\end{equation}
where $n_p$ denotes the number of partitions,
\begin{equation}
V_{\text{matrix}} = [v_{e,1}^*, v_{e,2}^*, \dots, v_{e,n+m+2}^*],
\end{equation}
and
\begin{equation}
\mathcal{K}_{\text{matrix}} = \left[
\mathcal{K}_{\mathcal{E}_1} v_{e,1}^*, \quad
\mathcal{K}_{\mathcal{E}_2} v_{e,2}^*, \quad
\dots, \quad
\mathcal{K}_{\mathcal{E}_{n+m+2}} v_{e,n+m+2}^*
\right]
\end{equation}

where $\mathcal{K}_{\mathcal{E}_i}$ represents the control gain associated with the ellipsoid $\mathcal{E}_i$ to which vertex $v_{e,i}^*$ belongs. The final control law is given by
\begin{equation}\label{eq.u_final}
v_{i,k} = \mathcal{K}_{i} \mathcal{Z}_k, \quad \text{for} \;\;\; i=1,\dots,n_p.
\end{equation}

The process is summarized in Algorithm \ref{Algorithm_2}, which is executed online.

\begin{algorithm}
\caption{State-Feedback Gain Computation}\label{Algorithm_2}
\begin{algorithmic}[1]
\Require Current state $\zeta_{e,k}$; Number of partitions $n_p$; Control gains $\mathcal{K}_{\mathcal{E}_i}$ for $i=1,2,\ldots,n_e$.
\Ensure State-feedback gain $\mathcal{K}_{j}$ for the $j$th partition.

\State \textbf{Determine} the partition to which $\zeta_k$ belongs.
\State \textbf{Identify} the corresponding convex hull vertices $\{ v_{e,1}^*, v_{e,2}^*, \dots, v_{e,n+m+2}^* \}$.
\State \textbf{Compute} the interpolated control gain
\begin{equation}
\mathcal{K}_{j} = K_{\text{matrix}} V_{\text{matrix}}^{-1}.
\end{equation}

\State \textbf{Obtain} the control law
\begin{equation}
v_{j,k} = \mathcal{K}_{j} \mathcal{Z}_k.
\end{equation}
 
\end{algorithmic}
\end{algorithm}

\subsection{Safe Transition Between Convex Hull of Ellipsoids}



A key component of data-driven motion planning using convex hulls of ellipsoids is ensuring safe transitions between successive convex regions while preserving system invariance. Unlike the single ellipsoid case, transitions here require checking for intersections between convex hulls composed of multiple ellipsoids and selecting an intermediate point to enable smooth progression.

Since safety constraints are primarily evaluated in the \( x \)-\( y \) plane, the ellipsoids forming each convex hull are first projected onto this plane. The 2D convex hulls are then computed, and their intersection is checked by verifying whether any vertex of one hull lies within the other. If an intersection exists, a vertex common to both regions is selected as the intermediate point, ensuring safe and continuous navigation.

Algorithm~\ref{alg:check_intersection_2d} formalizes this process, including projection, convex hull construction, intersection checking, and intermediate point selection. This method provides a dynamically feasible and safe transition strategy while enabling efficient exploration of the admissible state space.

\begin{algorithm}
\caption{Safe Transition Between Projected Convex Hull of Ellipsoids}
\label{alg:check_intersection_2d}
\textbf{Input:} Convex hulls of projected ellipsoids $\mathcal{F}_{proj,1}$ and $\mathcal{F}_{proj,2}$, corresponding vertices $v_{e,1}^{proj}$ and $v_{e,2}^{proj}$.\\
\textbf{Output:} Selected intermediate vertex $v_{\text{selected}}$, overlap flag $\text{overlap}$.

\begin{algorithmic}[1]
\State \textbf{Initialize} $\text{overlap} \gets 0$, $v_{\text{selected}} \gets \emptyset$
\State \textbf{Project} the shape matrices of all ellipsoids forming $\mathcal{F}_1$ and $\mathcal{F}_2$ onto the $x$-$y$ plane to obtain projected ellipsoids.
\State \textbf{Generate} the 2D convex hulls $\mathcal{F}_{proj,1}$ and $\mathcal{F}_{proj,2}$ using the projected ellipsoid vertices.
\State \textbf{Check} if any vertex of $\mathcal{F}_{proj,2}$ is inside $\mathcal{F}_{proj,1}$:
\If{$\mathcal{F}_{proj,1} v_{e,2}^{proj} \leq \bar{g}_{proj,1}$ for any vertex $v_{e,2}^{proj}$}
    \State \textbf{Select} the first valid vertex from $v_{e,2}^{proj}$ as $v_{\text{selected}}$
    \State $\text{overlap} \gets 1$
    \State \textbf{Return} $(v_{\text{selected}}, \text{overlap})$
\EndIf
\State \textbf{Check} if any vertex of $\mathcal{F}_{proj,1}$ is inside $\mathcal{F}_{proj,2}$:
\If{$\mathcal{F}_{proj,2} v_{e,1}^{proj} \leq \bar{g}_{proj,2}$ for any vertex $v_{e,1}^{proj}$}
    \State \textbf{Select} the first valid vertex from $v_{proj,1}$ as $v_{\text{selected}}$
    \State $\text{overlap} \gets 1$
\EndIf
\State \textbf{Return} $(v_{\text{selected}}, \text{overlap})$
\end{algorithmic}
\end{algorithm}

The system first moves from the parent node toward the selected vertex (intermediate waypoint), using the control policy associated with the corresponding partition of the parent region. Once the system reaches the intermediate point, the control gain switches to the one associated with the appropriate partition in the child region, guiding the system to the child node. This approach generalizes the safe transition strategy beyond single ellipsoids and enables more flexible and less conservative navigation, particularly in complex environments where a single ellipsoid may not adequately represent the admissible region.


\subsection{Execution of Safe Motion Planning}

Once a safe path has been planned using the convex hull of ellipsoids, the next step is to execute the planned trajectory while ensuring that the system remains within the predefined safe regions at each step. This involves continuously checking the current sampled point to determine which partition (convex region) it belongs to and applying the corresponding control gain.

The control law for each partition is computed based on the associated gain. The system state is updated using the dynamics and the computed control input. The process is repeated until the system reaches the target node $p_{goal}$.

Algorithm~\ref{alg:execution_convex_hull} describes the execution process. The algorithm identifies the current partition at each step and retrieves the corresponding control gain to compute the control input. This ensures that the trajectory remains safe and dynamically feasible throughout the execution.



\begin{algorithm}
\caption{Execution of Safe Motion Planning with Convex Hull of Ellipsoids}
\label{alg:execution_convex_hull}
\textbf{Input:} Precomputed waypoint sequence \( \{p_1, p_2, \dots, p_{N_w}\} \), convex partitions \( \{\mathcal{C}_i\} \), control gains \( \{\mathcal{K}_i\} \), system dynamics \( f(x, u) \), initial position \( p_{init} \), goal position \( p_{goal} = p_{N_w} \).\\
\textbf{Output:} Safe trajectory execution from \( p_{init} \) to \( p_{goal} \).

\begin{algorithmic}[1]
\State \textbf{Initialize} current waypoint index \( i \gets 1 \), current position \( p_{curr} \gets p_{init} \)
\While{$p_{curr} \neq p_{N_w}$}
    \State \textbf{Identify} the current convex partition \( \mathcal{C}_i \) containing \( p_{curr} \)
    \State \textbf{Retrieve} the control gain \( \mathcal{K}_i \) associated with \( \mathcal{C}_i \)
    \State \textbf{Compute} virtual control input \( v_k = \mathcal{K}_i \mathcal{Z}_k \)
    \State \textbf{Update} actual control input via integration: \( u_k = u_{k-1} + v_k \)
    \State \textbf{Update} system state: \( x_{k+1} = f(x_k, u_k) \)
    \State \textbf{Update} position \( p_{curr} \gets \text{position component of } x_{k+1} \)
    \If{$\|p_{curr} - p_i\| < r_f$ \textbf{and} \( i < N_w \)}
        \State \textbf{Advance} to the next waypoint: \( i \gets i + 1 \)
    \EndIf
\EndWhile
\State \textbf{Return} executed trajectory
\end{algorithmic}
\end{algorithm}

\subsection{Overall Method for Data-Driven Motion Planning Using the Convex Hull of Ellipsoids}



This subsection summarizes the complete framework for data-driven motion planning using convex hulls of ellipsoids. The method integrates key components, including admissible set identification, convex hull construction, safe transition verification, and trajectory execution. Leveraging convex hull representations enhances flexibility and ensures dynamically feasible paths.

A major advantage of using convex hulls of ellipsoids is the ability to expand the set of feasible transitions, enabling smoother and less conservative trajectories compared to single-ellipsoid approaches. This is accomplished by detecting intersections between adjacent convex hulls and computing corresponding control gains to guide the system safely.

Algorithm~\ref{alg:CHE-RRT} outlines the overall procedure. The algorithm initializes a graph and constructs the initial convex hull of ellipsoids. It then iteratively samples new points, determines admissible sets, computes convex hulls, and verifies safe transitions via intersection checks. Control gains are assigned to each segment to ensure safe and feasible execution of the planned path.

\begin{algorithm}
\caption{Data-Driven Safe Motion Planning Using Convex Hulls of Ellipsoids}
\label{alg:CHE-RRT}
\begin{algorithmic}[1]
\Require $p_{init}$, $p_{goal}$, collected data $V_0$, $\Xi_1$, and $\mathcal{Z}_0$, contraction factor $\lambda$, maximum iterations $N_{\text{max}}$, probability $p_r$ of goal sampling.
\Ensure A safe, dynamically feasible path from $p_{init}$ to $p_{goal}$.

\State \textbf{Initialize} graph $G = (V, E)$ with $V = \{p_{init}\}$, $E = \emptyset$.
\State \textbf{Compute} the initial convex hull of ellipsoids for $p_{init}$ using the SDP \eqref{eq.SDP_CHE}.
\While{$\|p_s - p_{goal}\| \geq r_f$ \textbf{and} iteration $\leq N_{\text{max}}$}
    \State \textbf{Sample} a new point $p_{s}$ using probability $p_r$.
    \State \textbf{Identify} the admissible set around $p_{s}$.
    \State \textbf{Compute} the convex hull of ellipsoids for $p_{s}$ by solving the SDP \eqref{eq.SDP_CHE}.
    \State \textbf{Check} for safe transition by verifying the intersection between the convex hulls at $p_{s}$ and $p_{nearest}$ using Algorithm 7.
    
    \If{an intersection exists}
        \State \textbf{Add} $p_{s}$ and the intermediate node to $G$.
    \EndIf
\EndWhile
\State \textbf{Execute} the planned trajectory using system dynamics and associated control gains for each convex hull.
\State \Return executed trajectory
\end{algorithmic}
\end{algorithm}

\section{Simulation}
In this section, the proposed motion planning algorithm is implemented on a simulated model of a real-world autonomous ground vehicle—the ROSbot 2R—within the Gazebo simulation environment, as shown in Fig.~\ref{fig:rosbot_env}. The Robot Operating System (ROS) is used to coordinate communication, control, and data exchange between the planner and the robot. Figure~\ref{fig:rosbot_env}(a) shows the initial configuration of the robot, while Fig.~\ref{fig:rosbot_env}(b) depicts the simulation environment, where a red cube shows the location of a static obstacle that the robot must safely avoid during navigation.

The kinematic model of the ROSbot 2R is described by the following equations
\begin{equation}\label{eq.kinematic_model} 
\begin{aligned}
\dot{x} &= v \cos \theta, \\ 
\dot{y} &= v \sin \theta, \\ 
\dot{\theta} &= \omega_{b},
\end{aligned} 
\end{equation}

In this model, the variables \( x \) and \( y \) represent the robot's position, and \( \theta \) denotes its orientation. The inputs \( v \) and \( \omega_b \) correspond to the linear and angular velocities of the robot, respectively. Let the state and input vectors be defined as \( x_s = [x, y, \theta]^T \) and \( u = [v, \omega_b]^T \). The discrete-time kinematic model can then be expressed as \eqref{eq.system}.

In the simulation, the contraction rate is set to \( \lambda = 0.84 \), and other parameters are selected as \( \epsilon_i = 0.002 \) and \( \tau_i = 0.1 \) for $i=1,\ldots,7$. For the lifted representation we set $\xi=[x_s^{\top}\;u^{\top}]^{\top}$ and adopt the dictionary
$S(\xi)=\bigl[v\cos\theta,\;v\sin\theta\bigr]^{\top}$.  
This yields a lifted dimension of $n_\mathcal{z}=7$, so by Assumption 5 at least $N\ge n_z+1=8$ state–input samples are required to satisfy the rank condition. Each sample is perturbed with zero‑mean Gaussian noise, $\Sigma_{\text{noise}}=10^{-4}I$, ensuring that the data matrix pair $\mathcal{Z_0}$ remains full row rank while still reflecting realistic measurement uncertainty. The robot starts at \( (-40,-40) \), must reach the goal \( (40,40) \), and must avoid a red square obstacle of side length \( 16\;\text{m} \) that is centered at the origin \( (0,0) \).


Figures \ref{fig:path_che}–\ref{fig:sample_bar} compare the three planners we tested in MATLAB. Figure \ref{fig:path_che} shows the path found by our convex-hull-of-ellipsoids (CHE) method; the polygons mark safe regions that the robot can stay inside while the feedback gains change from point to point. Figure \ref{fig:path_ref16} is the result of the “containment’’ planner from \cite{niknejad2024soda}, which insists that every new safe set must fully contain the center of the previous one. Figure \ref{fig:path_overlap} uses our overlapping-ellipsoids idea: it lets neighboring safe sets merely intersect and still proves safety with a small extra check, so the robot needs far fewer sample points to complete the same maneuver. The bar chart in Fig. \ref{fig:sample_bar} counts those samples: the containment method uses the most, the overlapping-ellipsoid version uses less, and the CHE planner needs the least, showing that relaxing the containment rule—or switching to CHE—cuts sampling effort without giving up safety. Among the three, the CHE planner was selected for deployment in the real-time robot simulation.

\begin{figure}[h]
    \centering
    \includegraphics[width=0.5\textwidth]{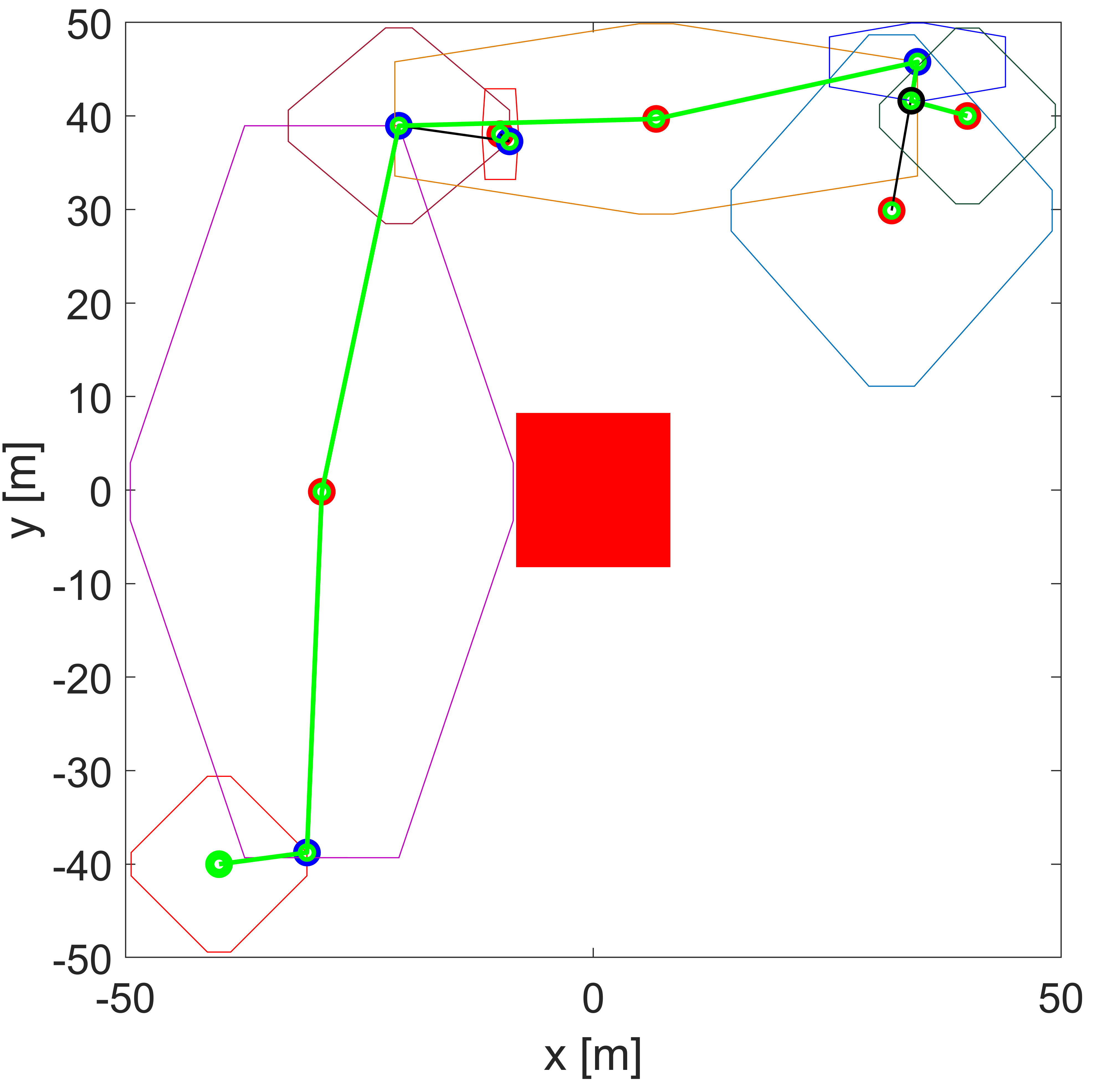}
    \caption{Safe path produced by the convex-hull of ellipsoids (CHE) method.}
    \label{fig:path_che}
\end{figure}

\begin{figure}[h]
    \centering
    \includegraphics[width=0.5\textwidth]{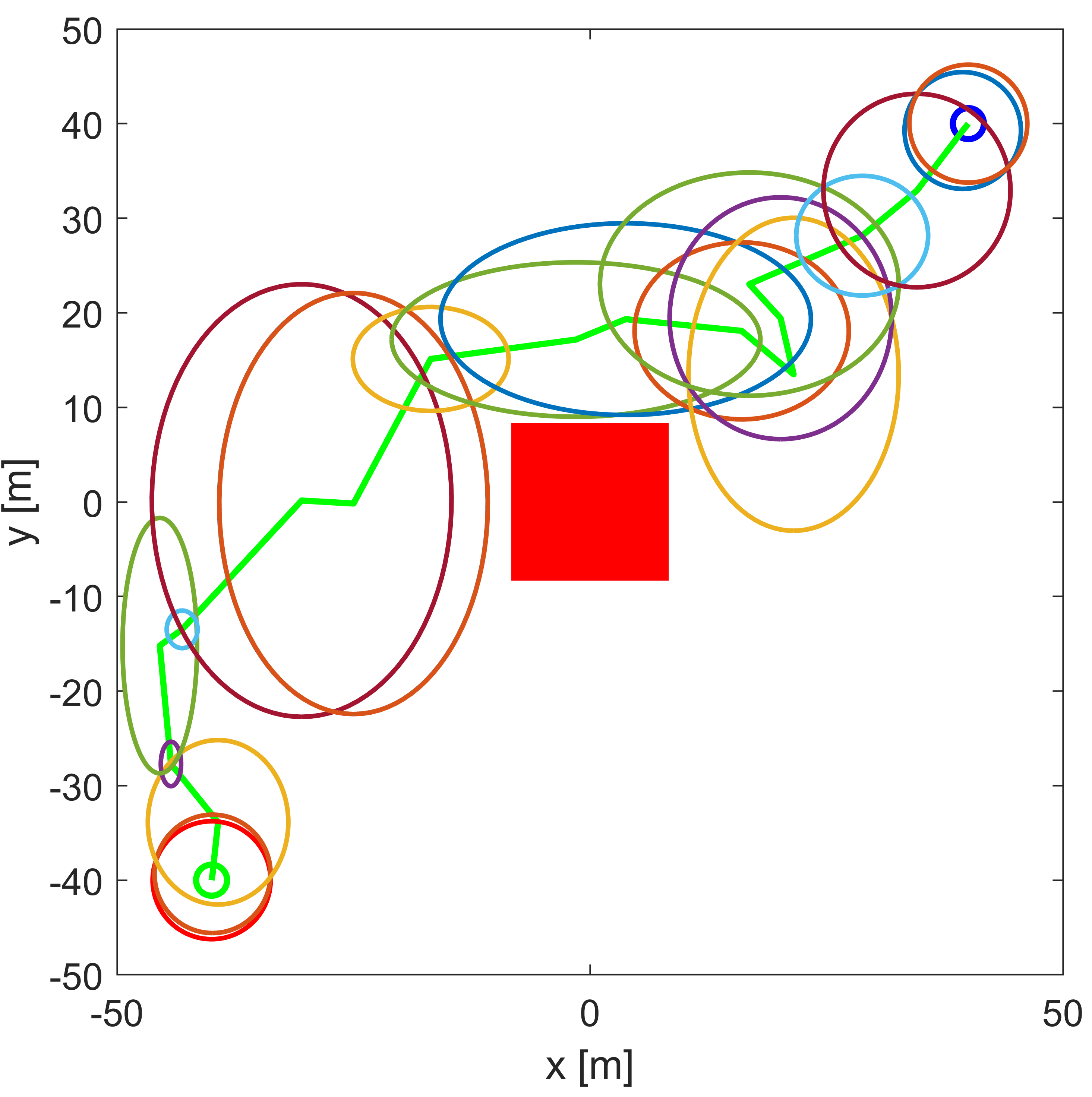}
    \caption{Safe path produced by the proposed overlapping-ellipsoids method.}
    \label{fig:path_overlap}
\end{figure}

\begin{figure}[h]
    \centering
    \includegraphics[width=0.5\textwidth]{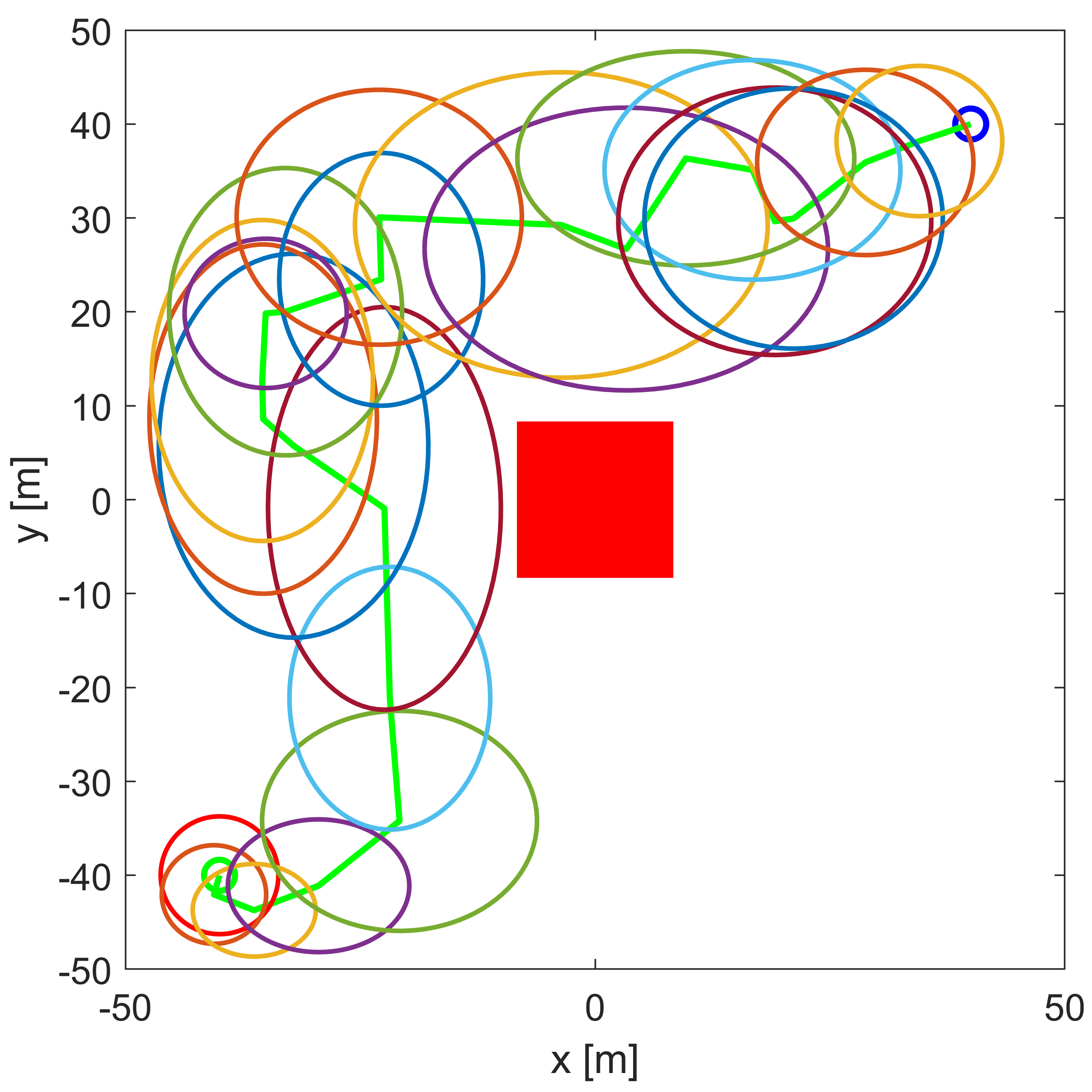}
    \caption{Safe path produced by the containment-based planner of~\cite{niknejad2024soda}.}
    \label{fig:path_ref16}
\end{figure}

\begin{figure}[h]
    \centering
    \includegraphics[width=0.5\textwidth]{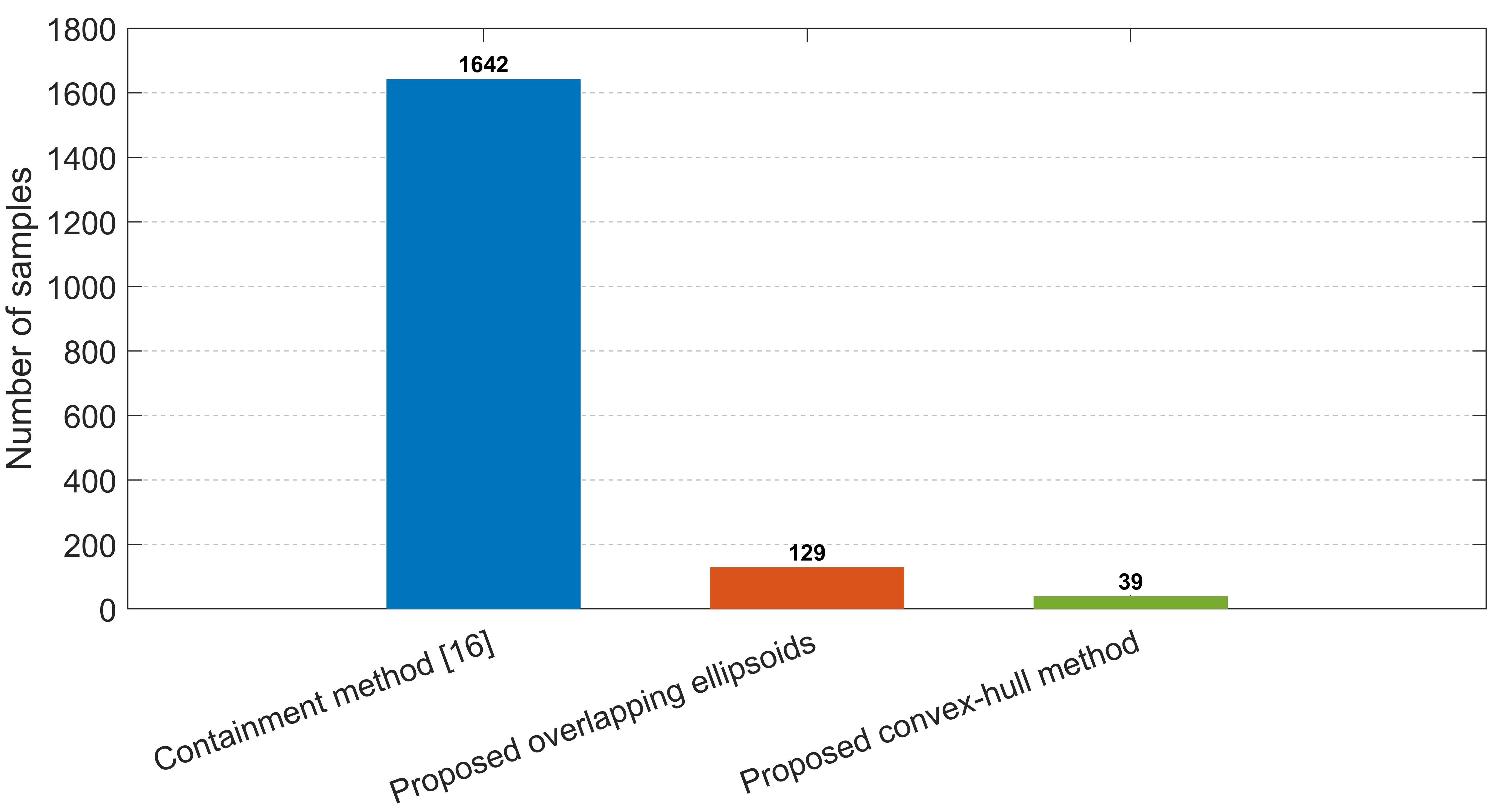}
    \caption{Number of samples used by each planner (containment~\cite{niknejad2024soda}, overlapping ellipsoids, and CHE).}
    \label{fig:sample_bar}
\end{figure}

Figure 8 shows the Gazebo test world, with the obstacle layout and the ROSbot 2R posed at the start. During the experiment a Python node running under ROS 2 streams the pre-computed CHE way-points and feedback gains to Gazebo, reads the robot’s odometry, and updates the control inputs at each cycle, closing the loop in real time. Figure 9 collects snapshots of the resulting motion: the robot stays inside every certified safe region defined by the convex-hull planner, navigates around the obstacle without incident, and reaches the goal while respecting all state constraints. This simulation confirms that the proposed method, paired with a straightforward ROS-Python interface, transfers seamlessly from offline planning to physics-based execution and maintains both safety and stability throughout.

Throughout the simulation, the control strategy ensures smooth and collision-free navigation. Since all optimization and data collection steps are performed offline, the online implementation in Gazebo remains computationally efficient, making the method suitable for real-time applications.


\begin{figure}[h]
    \centering
    \begin{subfigure}[b]{0.9\linewidth}
        \centering
        \includegraphics[width=0.7\linewidth]{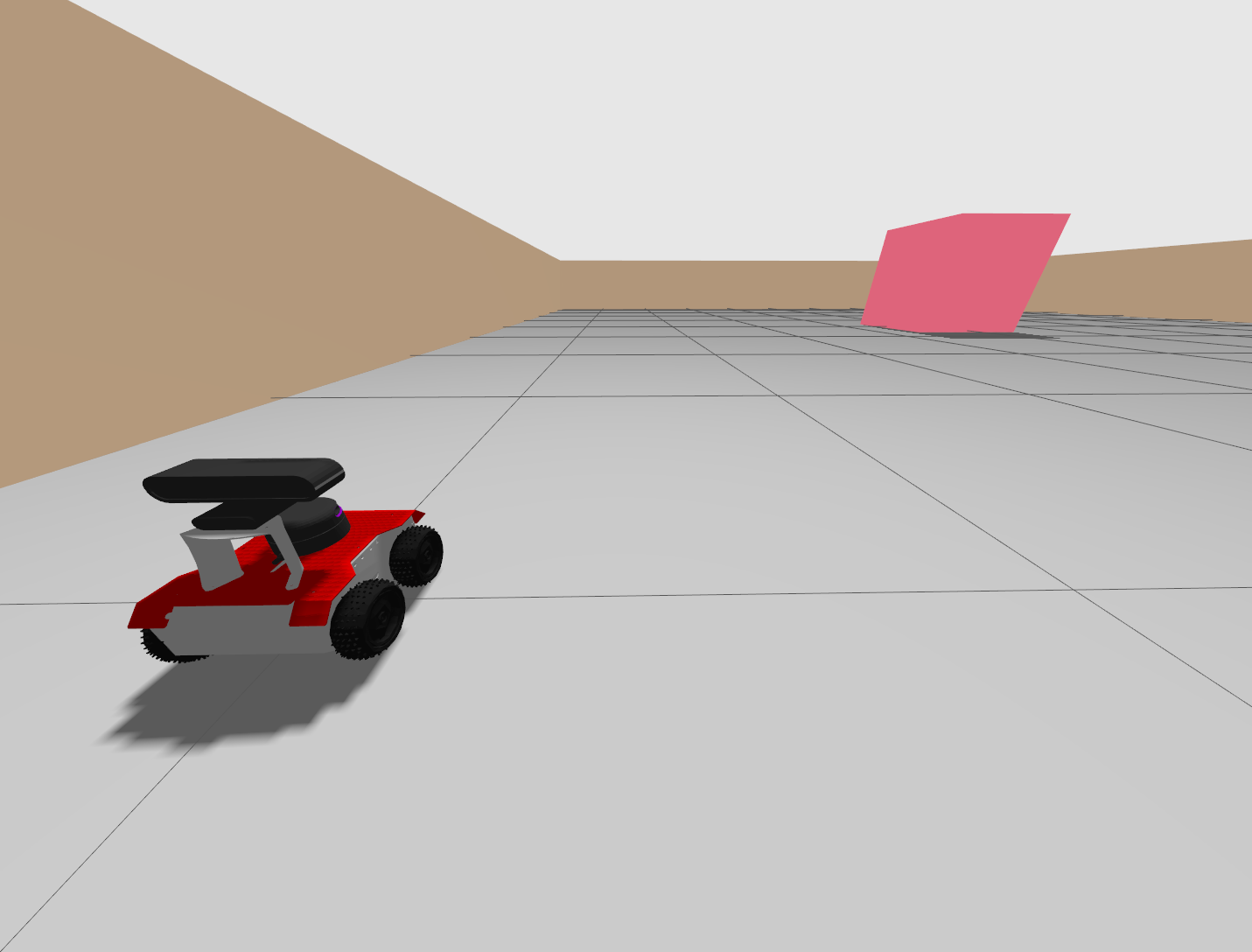}
        \caption{}
        \label{fig:rosbot_env_1}
    \end{subfigure}
    
    \vspace{0.5cm} 
    
    \begin{subfigure}[b]{0.9\linewidth}
        \centering
        \includegraphics[width=0.7\linewidth]{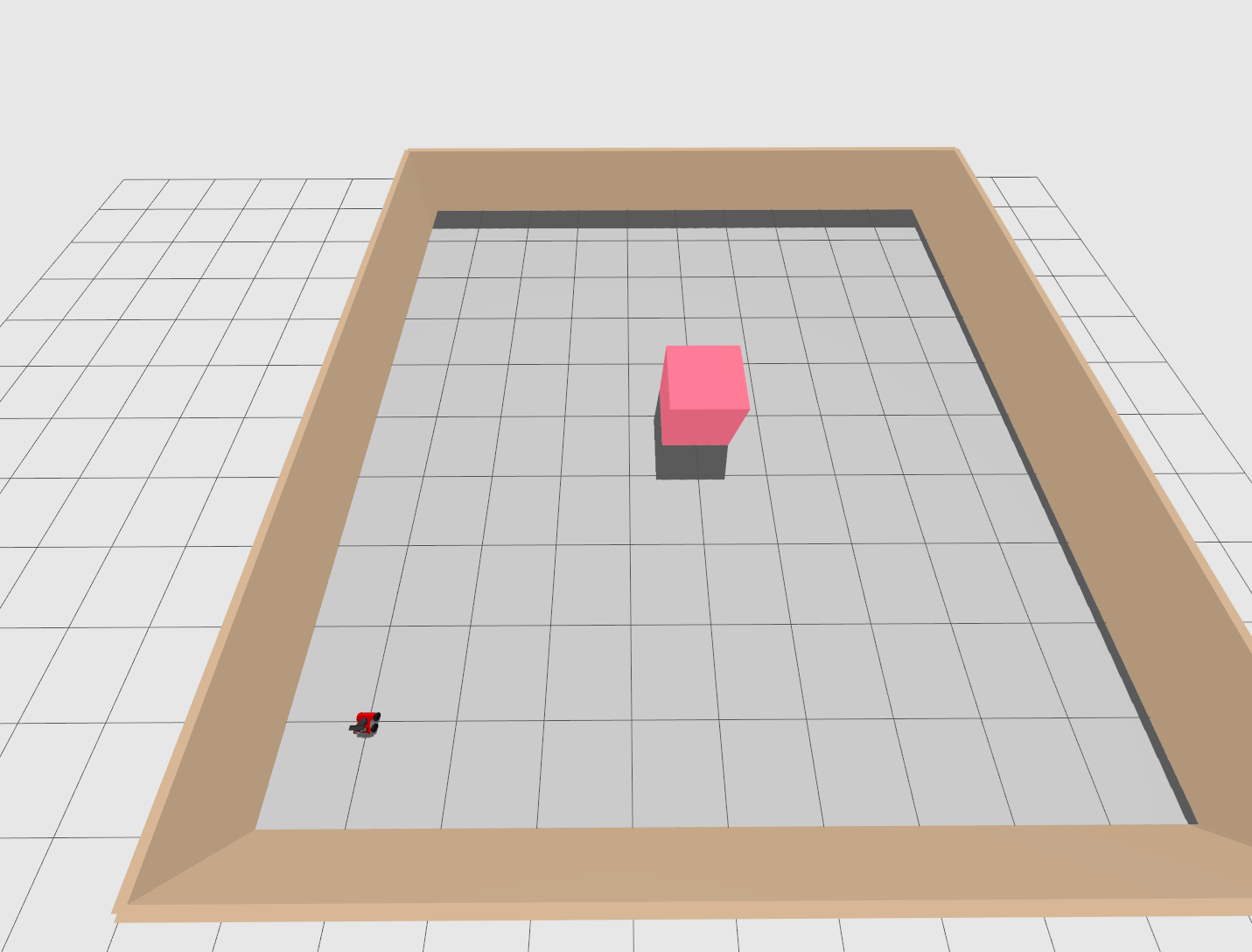}
        \caption{}
        \label{fig:rosbot_env_2}
    \end{subfigure}

    \caption{The robot's simulated environment in Gazebo, where the red cube represents a static obstacle used to evaluate collision avoidance during motion execution.}
    \label{fig:rosbot_env}
\end{figure}

\begin{figure*}[!h] 
    \centering
    \begin{tabular}{c c}
        \begin{subfigure}[b]{0.45\linewidth}
            \centering
            \includegraphics[height=4.0cm,width=0.7\linewidth]{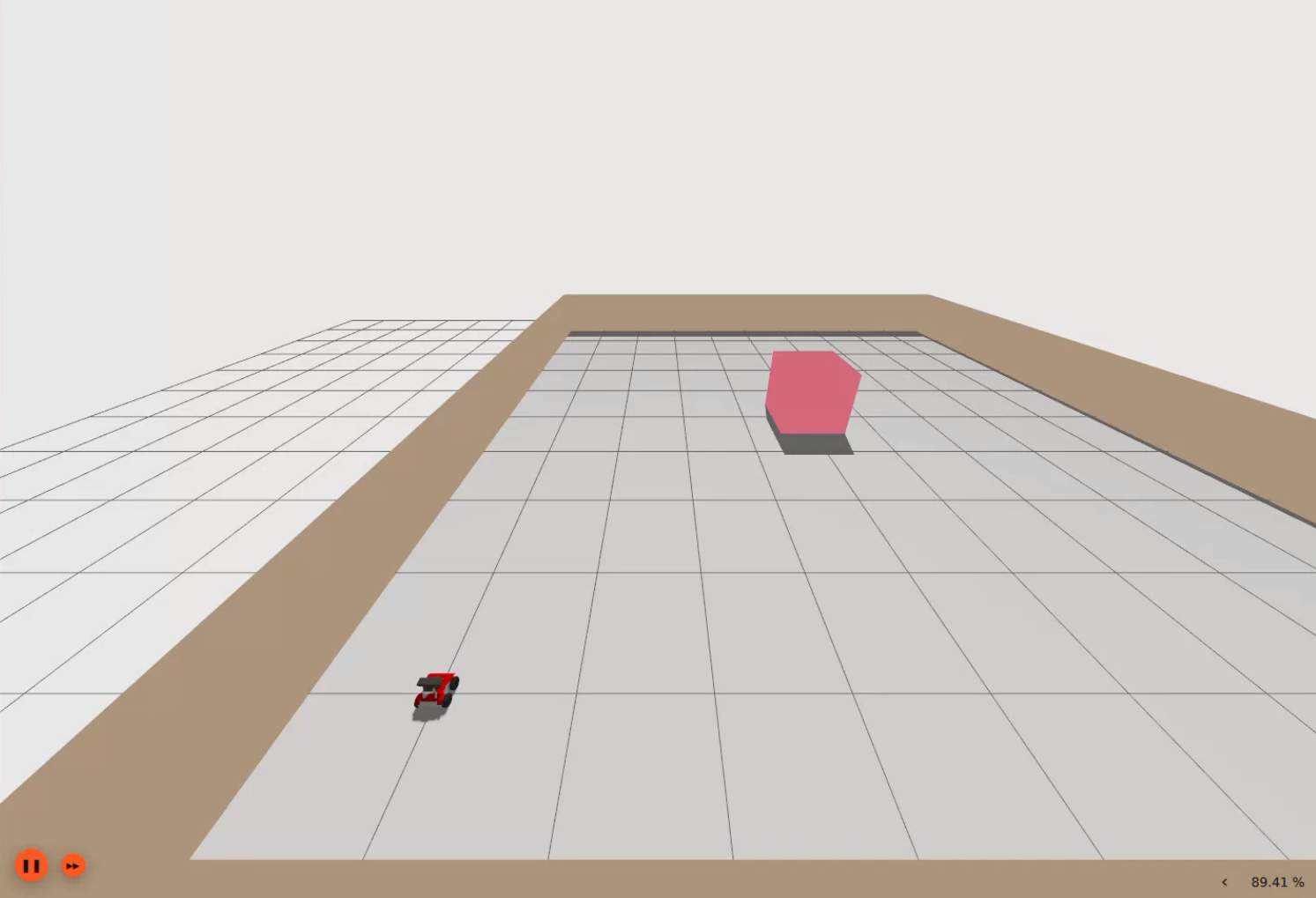}
            \caption{Episode 1}
            \label{fig:gazebo_1}
        \end{subfigure} &
        \begin{subfigure}[b]{0.45\linewidth}
            \centering
            \includegraphics[height=4.0cm,width=0.7\linewidth]{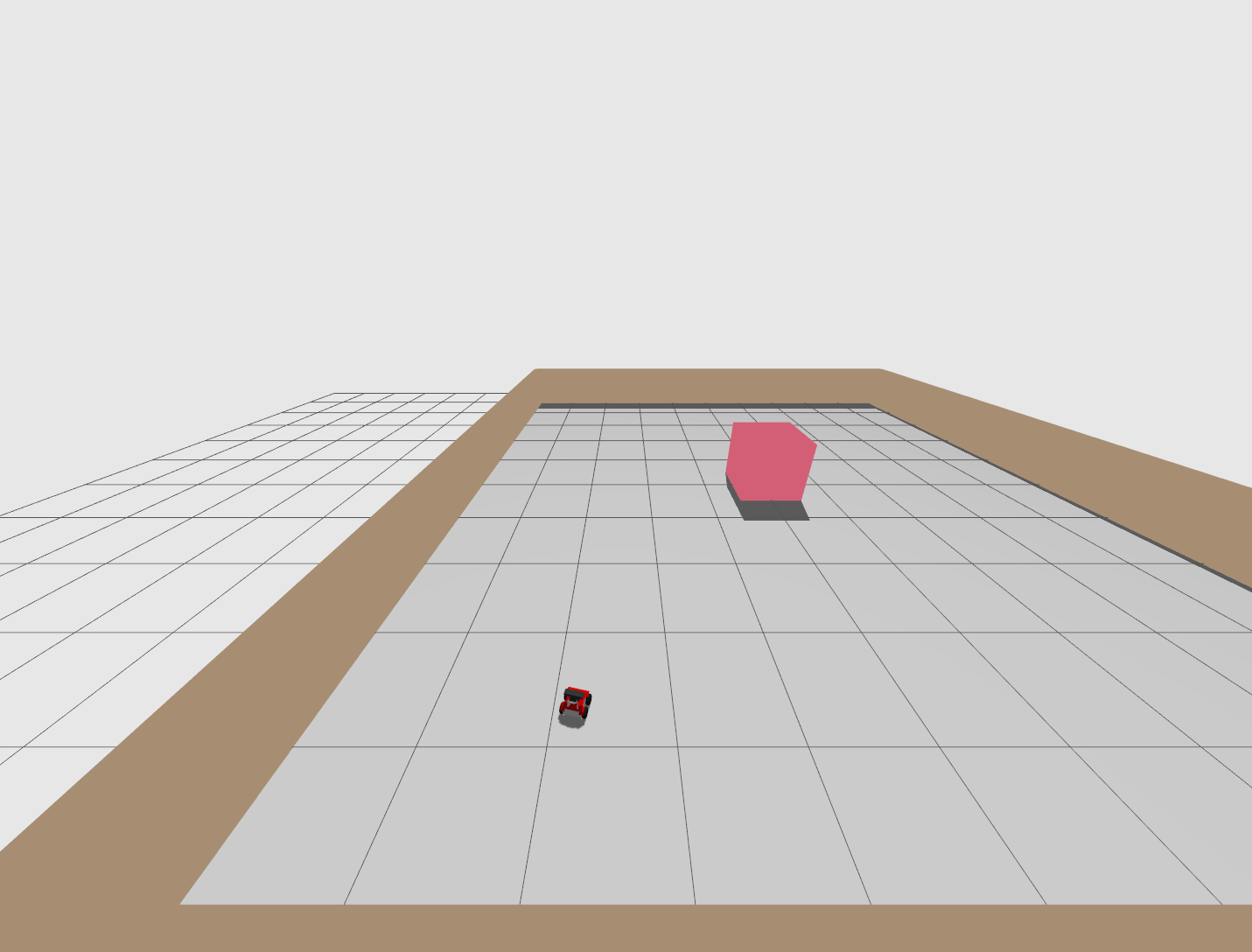}
            \caption{Episode 2}
            \label{fig:gazebo_2}
        \end{subfigure} \\

        \begin{subfigure}[b]{0.45\linewidth}
            \centering
            \includegraphics[height=4.0cm,width=0.7\linewidth]{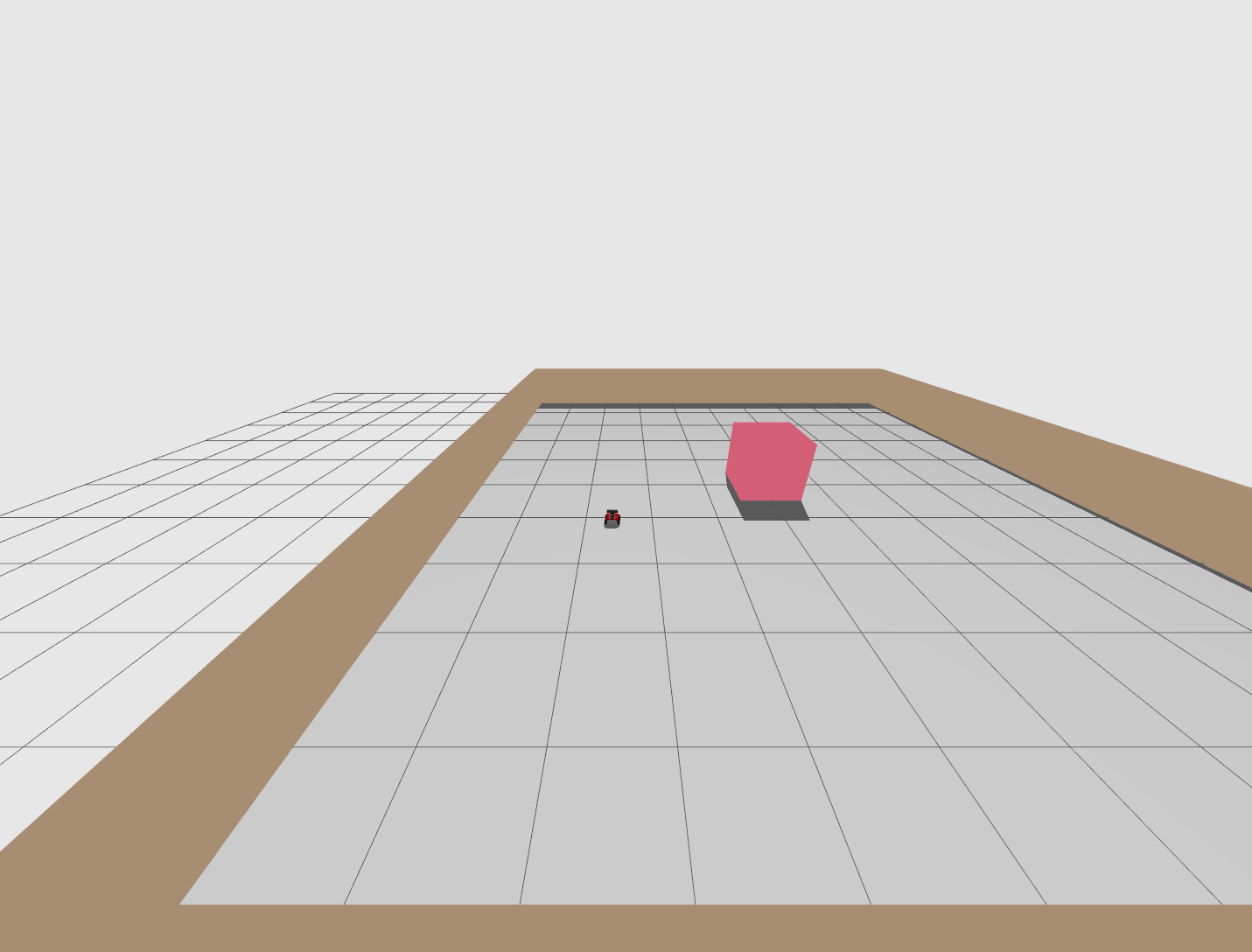}
            \caption{Episode 3}
            \label{fig:gazebo_3}
        \end{subfigure} &
        \begin{subfigure}[b]{0.45\linewidth}
            \centering
            \includegraphics[height=4.0cm,width=0.7\linewidth]{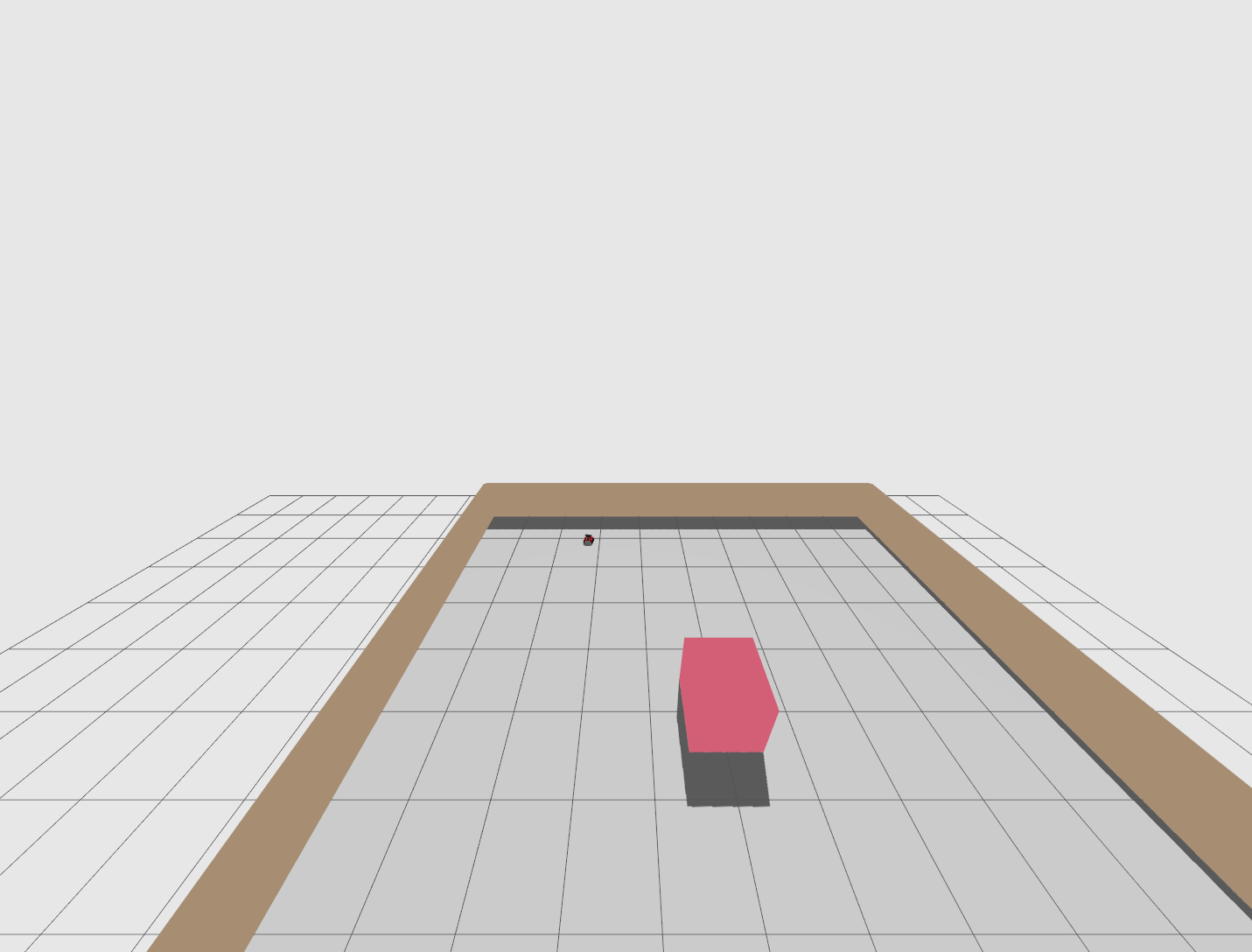}
            \caption{Episode 4}
            \label{fig:gazebo_4}
        \end{subfigure} \\

        \begin{subfigure}[b]{0.45\linewidth}
            \centering
            \includegraphics[height=4.0cm,width=0.7\linewidth]{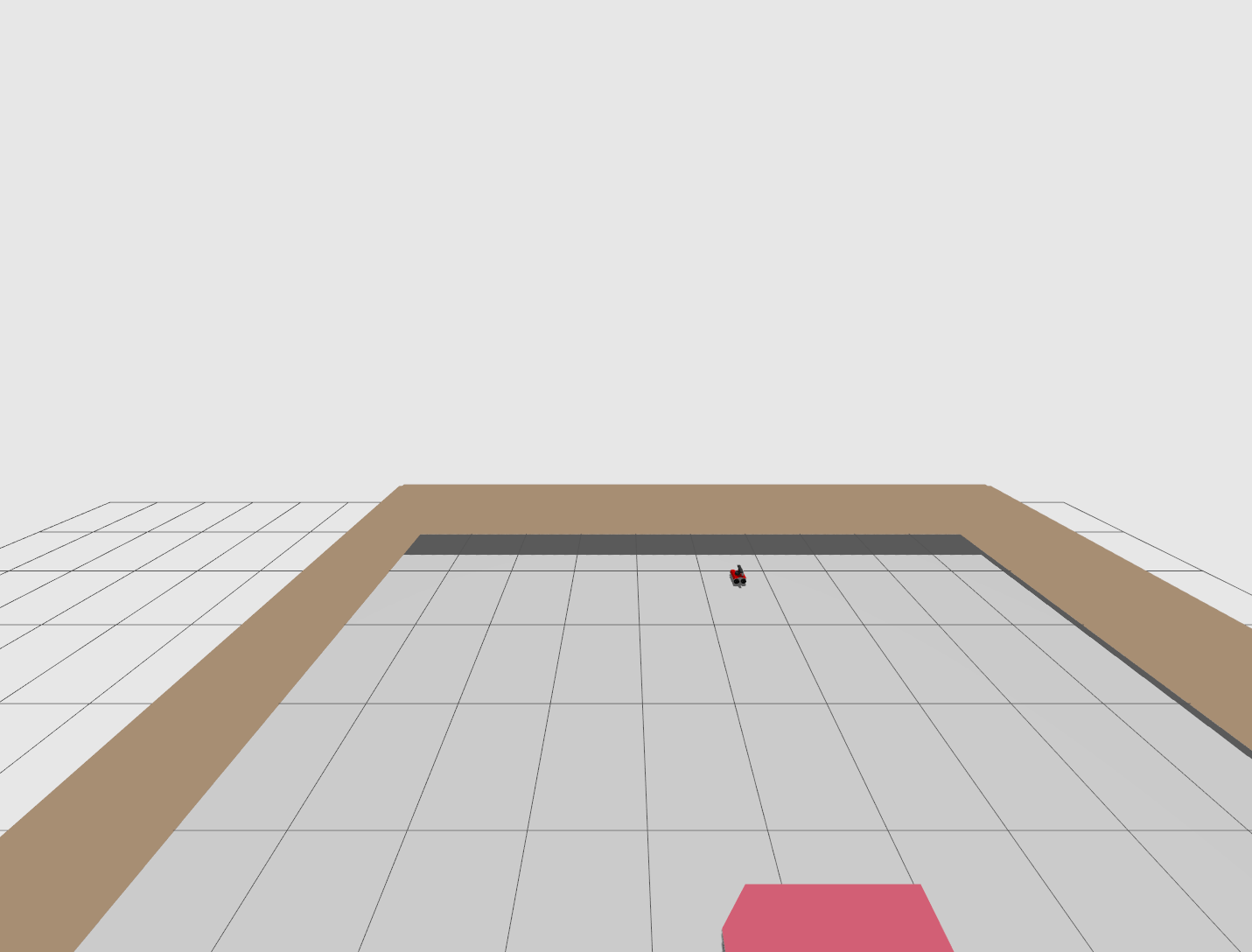}
            \caption{Episode 5}
            \label{fig:gazebo_5}
        \end{subfigure} &
        \begin{subfigure}[b]{0.45\linewidth}
            \centering
            \includegraphics[height=4.0cm,width=0.7\linewidth]{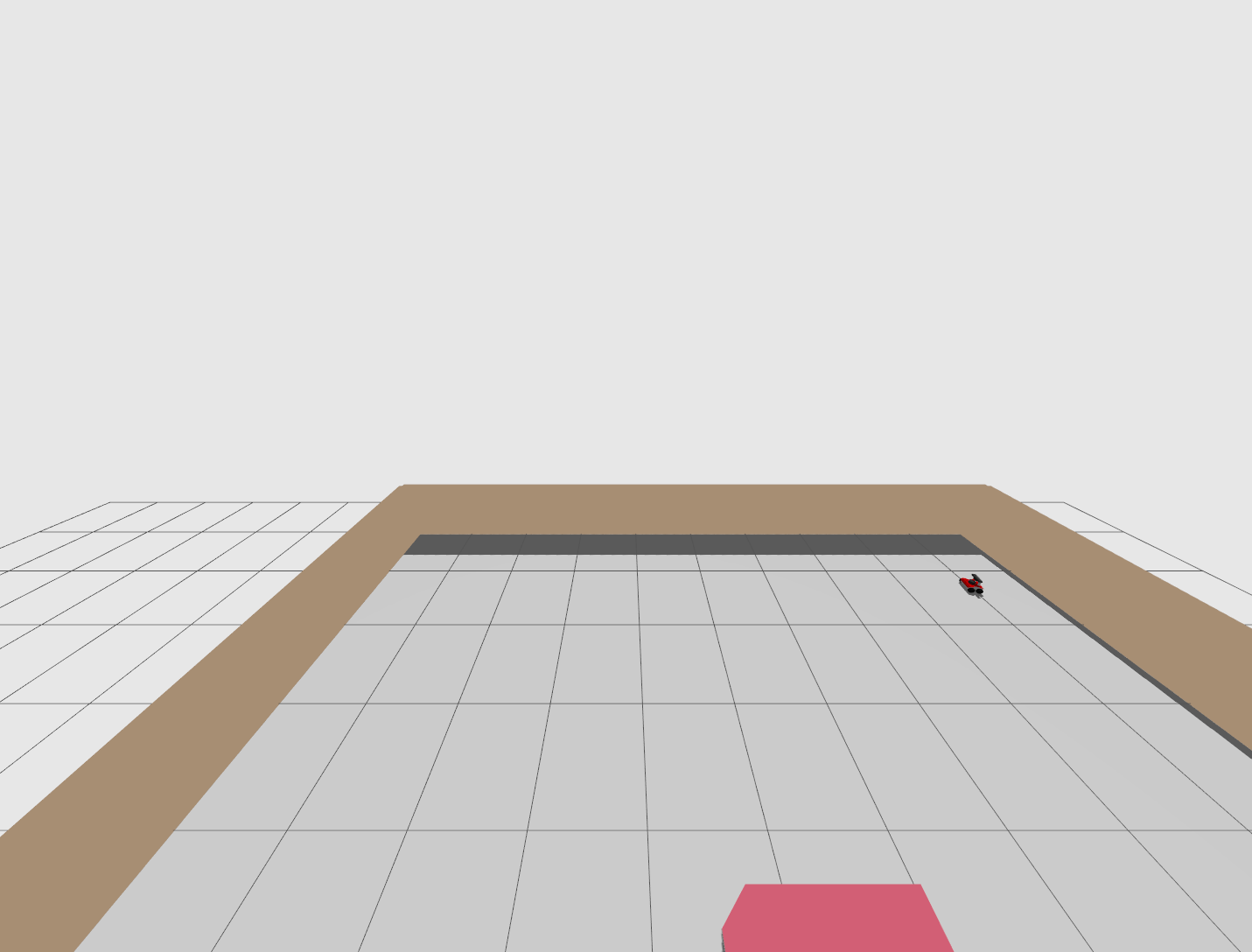}
            \caption{Episode 6}
            \label{fig:gazebo_6}
        \end{subfigure}
    \end{tabular}
    \caption{Different episodes of the Gazebo simulation. Each subplot shows a snapshot of the robot during the execution of the preplanned trajectory.}
    \label{fig:3by2grid}
\end{figure*}

\section{Conclusion}
In this work, a data-driven motion planning algorithm for nonlinear systems was developed to ensure safety by constructing invariant convex hulls of ellipsoids. The method leveraged data to define admissible polyhedral sets and determine safe regions without requiring an explicit model of the system dynamics. A key feature of the approach was verifying intersections between successive convex hulls to identify safe transitions and compute intermediate points for smooth, collision-free navigation. Control gains were interpolated to guide the system state within these safe regions while guaranteeing safety and avoiding obstacles. The proposed method effectively addressed the challenges of navigating complex non-convex environments and provided strong safety guarantees. Future work will focus on incorporating noise and uncertainties into the framework to provide probabilistic safety guarantees for more realistic and robust performance.

\appendices
\section{Proof of Theorem~\ref{thm:single_ellipsoid}}
\label{app:proof_single_ellipsoid}
To guarantee \( \lambda \)-contractivity, it is sufficient that
\begin{equation}
\zeta_{e,k+1}^\top \mathcal{P}^{-1} \zeta_{e,k+1} \leq \lambda \zeta^\top_{e,k} \mathcal{P}^{-1} \zeta_{e,k}.
\end{equation}

Upon expansion and simplification, the expression becomes
\begin{align}\label{eq.proof_CE_5}
& \lambda \zeta_{e,k}^\top \mathcal{P}^{-1} \zeta_{e,k} - \left(\Xi_1\mathcal{G}_{K,l}\zeta_{e,k}\right)^\top \mathcal{P}^{-1} \left(\Xi_1\mathcal{G}_{K,l}\zeta_{e,k}\right) \nonumber \\
& - \Gamma_1 - \Gamma_2 \geq 0,
\end{align}
with \( \Gamma_1 \) and \( \Gamma_2 \) defined by
\begin{align}
\Gamma_1 = & \left(\Xi_1\mathcal{G}_{K,l}\zeta_{e,k}\right)^\top \mathcal{P}^{-1} \left(\Xi_1\mathcal{G}_{K,nl}(S(\xi_k)-S(\xi^*))\right) + \nonumber \\
& \left(\Xi_1\mathcal{G}_{K,nl}(S(\xi_k)-S(\xi^*))\right)^\top \mathcal{P}^{-1} \left(\Xi_1\mathcal{G}_{K,l}\zeta_{e,k}\right),
\end{align}
\begin{equation}
\Gamma_2 = \left(\Xi_1\mathcal{G}_{K,nl}(S(\xi_k)-S(\xi^*))\right)^\top \mathcal{P}^{-1} \left(\Xi_1\mathcal{G}_{K,nl}(S(\xi_k)-S(\xi^*))\right).
\end{equation}

In the context of Lemma~1, \( \Gamma_1 \) can be expressed as
\begin{equation}
\begin{aligned}
\Gamma_1 & \leq \tau \left(\Xi_1\mathcal{G}_{K,l}\zeta_{e,k}\right)^\top \mathcal{P}^{-1} \left(\Xi_1\mathcal{G}_{K,l}\zeta_{e,k}\right) + \\
& \varepsilon \tau^{-1}\left(\Xi_1\mathcal{G}_{K,nl}(S(\xi_k)-S(\xi^*))\right)^\top \left(\Xi_1\mathcal{G}_{K,nl}(S(\xi_k)-S(\xi^*))\right).
\end{aligned}
\end{equation}

Assuming \( S(\xi) \) is Lipschitz continuous with constant \( L \), the following inequality holds
\begin{equation}
\|S(\xi_k) - S(\xi^*)\| \leq L \|\xi_k - \xi^*\| = L \|\xi_{e,k}\|.
\end{equation}

By squaring both sides, we obtain
\begin{equation}
(S(\xi_k) - S(\xi^*))^\top (S(\xi_k) - S(\xi^*)) \leq L^2 (\xi_{e,k}^\top \xi_{e,k}).
\end{equation}

Given that \( \mathcal{Q}^\top \mathcal{Q} \) is positive definite, the following equivalent form can be used
\begin{equation}
\xi_{e,k}^\top \mathcal{Q}^\top \mathcal{Q} \xi_{e,k} \geq L^2 (\xi_{e,k}^\top \xi_{e,k}),
\end{equation}

and therefore
\begin{equation}
(S(\xi_k) - S(\xi^*))^\top (S(\xi_k) - S(\xi^*)) \leq \xi_{e,k}^\top \mathcal{Q}^\top \mathcal{Q} \xi_{e,k}.
\end{equation}

Given that
\begin{equation}
\begin{aligned}
& (S(\xi_k) - S(\xi^*))^T (\Xi_1\mathcal{G}_{K,nl})^T \Xi_1\mathcal{G}_{K,nl} (S(\xi_k) - S(\xi^*)) \leq \\
& \|\Xi_1\mathcal{G}_{K,nl}\|_2^2 \zeta_{e,k}^T \mathcal{Q}^T \mathcal{Q} \zeta_{e,k}.
\end{aligned}
\end{equation}

Since the goal is to minimize the norm of the nonlinear residuals bounded by \( \gamma \), it follows that
\begin{equation}\label{eq.proof_LMI_CE_2}
\|\Xi_1\mathcal{G}_{K,nl}\| \leq \gamma,
\end{equation}
one gets
\begin{equation}\label{eq.proof_LMI_CE_2_2}
\begin{aligned}
& (S(\xi_k)-S(\xi^*))^T (\Xi_1\mathcal{G}_{K,nl})^T \Xi_1\mathcal{G}_{K,nl} (S(\xi_k)-S(\xi^*)) \leq \\
& \zeta_{e,k}^T \gamma^2 \mathcal{Q}^T \mathcal{Q} \zeta_{e,k} \leq \zeta_{e,k}^T \mathcal{P}^{-1} \zeta_{e,k}.
\end{aligned}
\end{equation}

This further leads to the following constraint
\begin{equation}\label{eq.proof_LMI_CE_3}
\gamma^2 \mathcal{Q}^T \mathcal{Q} \preceq \mathcal{P}^{-1}.
\end{equation}

Then,
\begin{equation}
\begin{aligned}
\Gamma_1 & \leq \tau \left(\Xi_1\mathcal{G}_{K,l}\zeta_{e,k}\right)^\top \mathcal{P}^{-1} \left(\Xi_1\mathcal{G}_{K,l}\zeta_{e,k}\right) + \\
& \varepsilon \tau^{-1}\zeta_{e,k}^T \mathcal{P}^{-1} \zeta_{e,k}
\end{aligned}
\end{equation}
where \( \varepsilon \geq \lambda_{\max}(\mathcal{P}) \).

Moreover, considering that \( (\Xi_1\mathcal{G}_{K,nl}(S(\xi_k)-S(\xi^*)))^\top \mathcal{P}^{-1} (\Xi_1\mathcal{G}_{K,nl}(S(\xi_k)-S(\xi^*))) \leq \varepsilon (S(\xi_k)-S(\xi^*))^\top (\Xi_1\mathcal{G}_{K,nl})^\top \Xi_1\mathcal{G}_{K,nl} (S(\xi_k)-S(\xi^*)) \), and using the relations in \eqref{eq.proof_LMI_CE_2_2} and \eqref{eq.proof_LMI_CE_3}, an upper bound for \( \Gamma_2 \) can be derived as follows
\begin{equation}
\Gamma_2 \leq \varepsilon \zeta_{e,k}^\top \mathcal{P}^{-1} \zeta_{e,k}.
\end{equation}

Accordingly, inequality~\eqref{eq.proof_CE_5} can be simplified to
\begin{align}\label{eq.proof_CE_6}
& \zeta_{e,k}^\top \left[\lambda \mathcal{P}^{-1} - (1+\tau)(\Xi_1\mathcal{G}_{K,l})^\top \mathcal{P}^{-1} \Xi_1\mathcal{G}_{K,l} \right] \zeta_{e,k} \nonumber \\
& - \varepsilon (1+\tau^{-1}) \zeta_{e,k}^\top \mathcal{P}^{-1} \zeta_{e,k} \geq 0.
\end{align}

This condition is met if
\begin{equation}\label{eq.proof_CE_7}
(\lambda-\varepsilon (1+\tau^{-1})) \mathcal{P}^{-1} \succeq (1+\tau)(\Xi_1\mathcal{G}_{K,l})^\top \mathcal{P}^{-1} \Xi_1\mathcal{G}_{K,l}.
\end{equation}

By multiplying both sides by \( \mathcal{P} \) and applying the Schur complement, the following LMI is obtained
\begin{equation}\label{eq.LMI_CE}
\begin{bmatrix}
\lambda \mathcal{P} - \varepsilon(1+\tau^{-1}) \mathcal{P} & \sqrt{1+\tau}Y^\top \Xi_1^\top \\
(*) & \mathcal{P}
\end{bmatrix} \succeq 0.
\end{equation}

Here, \( \mathcal{Y} \triangleq \mathcal{G}_{K,l} \mathcal{P} \) is introduced to transform the bilinear matrix inequalities (BMIs) into LMIs. Additionally, by applying the Schur complement, the constraints \( \varepsilon \geq \lambda_{\max}(\mathcal{P}^{-1}) \), \eqref{eq.proof_LMI_CE_2}, and \eqref{eq.proof_LMI_CE_3} are reformulated as the LMIs \eqref{eq.DD_CE_conditions_4}, \eqref{eq.DD_CE_conditions_5}, and the first condition in \eqref{eq.DD_CE_conditions_7}, respectively. Moreover, constraints \eqref{eq.DD_CE_conditions_1} and \eqref{eq.DD_CE_conditions_2} directly follow from \eqref{eq.lemma_3}.

We now present the conditions required to ensure that the contractive ellipsoids are fully contained within the admissible set. Specifically, the ellipsoid \( \mathcal{E}(\mathcal{P}, 0) \) is contained within the polytope \( \mathcal{S} \) if and only if the following condition holds~\cite{boyd1994linear}

\begin{equation}\label{eq.set_containment}
\max \{\mathcal{F}_l\zeta_{e} \ | \ \zeta_e \in \mathcal{E}(\mathcal{P},0) \} \leq \bar{g}_l,
\end{equation}

which is equivalent to \eqref{eq.DD_CE_conditions_7}. This equivalence holds because the ellipsoidal set lies entirely within the polytope, implying that all points satisfying the ellipsoidal constraint also satisfy the polyhedral constraints
\begin{equation}\label{eq.set_containment_3}
\mathcal{F}_l\zeta_e \leq \bar{g}_l, \quad \mathrm{for} \,\,\, l=1,\ldots,q.
\end{equation}

Multiplying \eqref{eq.set_containment_3} by its transpose results in

\begin{equation}\label{eq.set_containment_4}
\mathcal{F}_l\zeta_e\zeta_e^T \mathcal{F}_l^T \leq \bar{g}_l^2.
\end{equation}

Additionally, by the definition of ellipsoidal sets, we have $\zeta_e\zeta_e^T \leq \mathcal{P}$. Therefore, inequality \eqref{eq.set_containment_4} becomes equivalent to

\begin{equation}\label{eq.set_containment_5_0}
\mathcal{F}_l\mathcal{P} \mathcal{F}_l^T \leq \bar{g}_l^2.
\end{equation}

Finally, applying the Schur complement to \eqref{eq.set_containment_5_0} leads to the second constraint in \eqref{eq.DD_CE_conditions_7}.

Additionally, to identify the largest ellipsoids, the standard method is to maximize their volume by maximizing the logarithm of the determinant of the shape matrix $\mathcal{P}$. This completes the proof.



\section{Proof of Theorem~\ref{thm:CHE}}\label{proof_thm_3}
Inspired by \cite{nguyen2023further}, we aim to show that if $\zeta_{e,k} \in \mathcal{S}_c$, then $\zeta_{e,k+1} \in \lambda_j' \mathcal{S}_c$. Since $\zeta_{e,k}$ belongs to the convex hull of ellipsoids, it can be expressed as
\begin{equation}
    \zeta_{e,k} = \sum_{j=1}^{n_e} \alpha_{j,k} \upsilon_{j,k},
\end{equation}
where $\upsilon_{j,k} \in \mathcal{E}(\mathcal{P}_j,0)$ and $\sum_{j=1}^{n_e} \alpha_{j,k} = 1$. 

To complete the proof, we need to show that if $\upsilon_{j,k} \in \mathcal{E}(\mathcal{P}_j,0)$, then $\upsilon_{j,k+1} \in \lambda\mathcal{S}_c$. To do so, by pre- and post-multiplying \eqref{eq.DD_CE_conditions_3_CHE} with
\begin{equation}\label{eq.temp_matrix_1_proof_1}
\begin{bmatrix}
I & 0 \\
0 & \mathcal{P}_j^{-1}
\end{bmatrix},
\end{equation}
we obtain
\begin{equation}\label{eq.openLoop_condition_1_changed}
\begin{bmatrix}
\mathcal{P}_i & \Xi_1\mathcal{G}_{K,l} \\
(*) & \lambda_j' \mathcal{P}_j^{-1}
\end{bmatrix} \succeq 0, \quad \forall i=1,\ldots,n_e.
\end{equation}

Multiplying \eqref{eq.openLoop_condition_1_changed} by $\alpha_{j,k}$ and summing over all $j$ results in
\begin{equation}\label{eq.openLoop_condition_1_changed_2}
\begin{bmatrix}
\sum\limits_{j=1}^{n_e}\alpha_{j,k}\mathcal{P}_j & \Xi_1\mathcal{G}_{K,l,j} \\
(*) & \lambda_j' \mathcal{P}_j^{-1}
\end{bmatrix} \succeq 0, \quad \forall j=1,\ldots,n_e.
\end{equation}

Using the Schur complement, equation \eqref{eq.openLoop_condition_1_changed_2} can be rewritten as
\begin{equation}\label{eq.Schur_proof_0}
(\Xi_1\mathcal{G}_{K,l,j})^\top \big(\sum\limits_{j=1}^{n_e}\alpha_{j,k}\mathcal{P}_j\big)^{-1}\Xi_1\mathcal{G}_{K,l,j} \leq \lambda_j' \mathcal{P}_j^{-1}.
\end{equation}
This implies, according to Corollary 1, that $\upsilon_j(t) \in \mathcal{E}(\mathcal{P}_j,0)$ leads to $\upsilon_j(t+1) \in \lambda_j'\mathcal{E}(\sum\limits_{j=1}^{n_e}\alpha_{j,k}\mathcal{P}_j,0)$.

Now, we show that $\lambda_j'\mathcal{E}(\sum\limits_{j=1}^{n_e}\alpha_{j,k}\mathcal{P}_j,0) \subseteq \lambda\mathcal{S}_c$. We proceed via contradiction. Suppose there exists a point $\zeta_{e,p} \in \lambda_j'\mathcal{E}(\sum\limits_{j=1}^{n_e}\alpha_{j,k}\mathcal{P}_j,0)$ that is not contained in the convex hull of ellipsoids. Without loss of generality, assume $\zeta_{e,p}$ lies on the boundary of $\lambda_j'\mathcal{E}(\sum\limits_{j=1}^{n_e}\alpha_{j,k}\mathcal{P}_j,0)$. Let $a_p \in \mathbb{R}^{n_e}$ define the supporting hyperplane at $\zeta_{e,p}$. Since both $\lambda_j'\mathcal{E}(\sum\limits_{j=1}^{n_e}\alpha_{j,k}\mathcal{P}_j,0)$ and $\lambda\mathcal{S}_c$ are symmetric about the origin, we obtain
\begin{equation}\label{eq.proof_supporting_hyperplane}
|a_p^T \zeta_e| < |a_p^T \zeta_{e,p}| = b_p^2, \quad \forall \zeta_e \in \lambda'\mathcal{S}_c.
\end{equation}
Thus, we have
\begin{equation}\label{eq.proof_supporting_hyperplane_2}
a_p^T \lambda'\mathcal{E}(\sum\limits_{j=1}^{n_e}\alpha_{j,k}\mathcal{P}_j,0) a_p = b_p^2.
\end{equation}

Since \eqref{eq.proof_supporting_hyperplane} must hold for all $\zeta_e \in \lambda'\mathcal{S}_c$, it follows that
\begin{equation}\label{eq.proof_supporting_hyperplane_4}
a_p^T \lambda_j' \mathcal{P}_j a_p < b_p^2.
\end{equation}
For any $\alpha_{j,k} \geq 0$ with $\sum\limits_{j=1}^{n_e} \alpha_{j,k} = 1$, we have
\begin{equation}\label{eq.proof_supporting_hyperplane_5}
a_p^T \lambda_j'\mathcal{E}(\sum\limits_{j=1}^{n_e}\alpha_{j,k}\mathcal{P}_j,0) a_p < b_p^2.
\end{equation}

This contradicts \eqref{eq.proof_supporting_hyperplane_2}. Hence, it is concluded that 
\[
\lambda_j'\mathcal{E}(\sum\limits_{j=1}^{n_e}\alpha_{j,t}\mathcal{P}_j,0) \subseteq \lambda\mathcal{S}_c.
\]
Using \eqref{eq.Schur_proof_0}, we get $\upsilon_{j,k+1} \in \lambda\mathcal{S}_c$, and thus $\zeta_{e,k+1} \in \lambda\mathcal{S}_c$.

The ellipsoid $\mathcal{E}(\mathcal{P}_i,0)$ is contained in the polytope $\mathcal{S}$ if and only if \cite{boyd1994linear}
\begin{equation}\label{eq.set_containment}
\max \{\mathcal{F}_l \zeta_e \mid \zeta_e \in \mathcal{E}(\mathcal{P}_i,0) \} \leq \bar{g}_l.
\end{equation}
Using Schur complement, this condition can be reformulated as
\begin{equation}\label{eq.set_containment_5}
\mathcal{F}_l \mathcal{P}_i \mathcal{F}_l^T \leq \bar{g}_l^2.
\end{equation}
Applying Schur complement again leads to constraint \eqref{eq.DD_CE_conditions_7_CHE}.

To maximize the convex hull, one typically maximizes the volume of ellipsoids. Alternatively, we can expand ellipsoidal shapes in specific reference directions, as discussed in \cite{hu2002analysis}. 

Let $d_i \in \mathbb{R}^{n_e}$ be the reference direction for $\mathcal{E}(\mathcal{P}_i,0)$. The optimization of $\mathcal{E}(\mathcal{P}_i,0)$ in this direction is equivalent to maximizing $\vartheta_i$ under
\begin{equation}
\vartheta_i^2 d_i^T \mathcal{P}_i^{-1} d_i \leq 1,
\end{equation}
which, via Schur complement, is reformulated as \eqref{eq.DD_CE_conditions_8_CHE}. This completes the proof.




\ifCLASSOPTIONcaptionsoff
  \newpage
\fi

\bibliographystyle{IEEEtran}
\bibliography{Refs}

\vfill

\end{document}